\documentclass[11pt]{article}%
\usepackage{algorithm}
\usepackage{algpseudocode}
\usepackage{amssymb}
\usepackage{amsfonts}
\usepackage{amsmath}
\usepackage{graphicx}%
\setcounter{MaxMatrixCols}{30}
\providecommand{\U}[1]{\protect\rule{.1in}{.1in}}
\setlength{\textheight}{240mm}
\setlength{\textwidth}{170mm}
\addtolength{\topmargin}{-2.5cm}
\addtolength{\oddsidemargin}{-2.4cm}

\usepackage{tikz}
\usetikzlibrary{shapes}
\usepackage{pgfplots}
\usepackage{verbatim}
\usepgfplotslibrary{statistics}
\pgfplotsset{compat=1.8}
\usepackage{amsthm}
\renewcommand\and{\end{tabular}\kern-\tabcolsep\ and\ \kern-\tabcolsep\begin{tabular}[t]{c}}
\let\origthanks\thanks
\renewcommand\thanks[1]{\begingroup\let\rlap\relax\origthanks{#1}\endgroup}


\newtheorem{theorem}{Theorem}

\newtheorem{corollary}{Corollary}

\newtheorem{definition}[theorem]{Definition}

\newtheorem{lemma}[theorem]{Lemma}

\newtheorem{proposition}[theorem]{Proposition}

\begin{document}

\title{Network Reconstruction and Prediction of Epidemic Outbreaks for NIMFA Processes}
\author{Bastian Prasse\thanks{ Faculty of Electrical Engineering, Mathematics and
Computer Science, P.O Box 5031, 2600 GA Delft, The Netherlands; \emph{email}:
b.prasse@tudelft.nl, p.f.a.vanmieghem@tudelft.nl} \and Piet Van Mieghem\footnotemark[1]}
\date{Delft University of Technology\\
November 16, 2018}
\maketitle
\begin{abstract}
Predicting the viral dynamics of an epidemic process requires the knowledge of the underlying contact network. However, the network is not known for most applications and has to be inferred from observing the viral state evolution instead. We propose a polynomial-time network reconstruction algorithm for the discrete-time NIMFA model based on a basis pursuit formulation. Given only few initial viral state observations, the network reconstruction method allows for an accurate prediction of the further viral state evolution of every node provided that the network is sufficiently sparse.
\end{abstract}

\section{Introduction} 
 The field of epidemics encompasses a plethora of phenomena and is rooted in the description of infectious diseases \cite{anderson1992infectious}, with seminal works by Bernoulli \cite{bernoulli1760essai} and Snow \cite{snow1855mode}. Beyond infectious diseases, the spread of opinions, trends and fake news on online social networks can be described as the epidemic of a viral infection, whereby each individual is either infected (with the opinion, trend, etc.) or healthy. Epidemic processes over networks assume that the spreading may occur from one to another individual only if the two individuals have contact \cite{pastor2015epidemic}, for instance by a friendship relation. On a coarser level, one can describe the evolution of the virus between groups, or communities, of similar individuals, which is the focus of this work.
    
We consider the viral spread over a network with $N$ nodes, represented by an $N \times N$ adjacency matrix $A$, which specifies the existence of links between the nodes. The elements $a_{ij}$ of the adjacency matrix $A$ denote the presence and the absence of a link between node $i$ and $j$ by $a_{ij} = 1$ and $a_{ij} = 0$, respectively. The viral state of node $i$ at continuous time $t \ge 0$ is denoted by $v_i(t) \in [0, 1]$, which can be interpreted as the fraction of infected individuals of group $i$ at time $t$. The NIMFA model \cite{van2009virus, van2011n} describes the evolution of the viral state as
\begin{align}
	\frac{d v_i (t)}{d t } & = - \delta v_i(t) + \beta \sum^N_{j = 1} a_{i j} v_j(t) - \beta v_i(t) \sum^N_{j = 1} a_{i j}  v_j(t), \label{NIMFA_continuous}
\end{align}
for node $i = 1, ..., N$. The NIMFA model (\ref{NIMFA_continuous}) has two parameters: the \textit{curing rate} $\delta \ge 0$ and the \textit{infection rate} $\beta$. In this work, we confine ourselves to the \textit{discrete-time NIMFA model} \cite{pare2018analysis}
\begin{align}
v_i [k + 1] & = (1 - \delta_T) v_i[k] + \beta_T (1 - v_i[k])  \sum^N_{j=1} a_{i j} v_j[k], \quad i = 1, ..., N, \label{NIMFA_disc_}
\end{align}
where $k \in \mathbb{N}$ denotes the discrete time slot. The equations (\ref{NIMFA_disc_}) are obtained by applying Euler's method \cite{stoer2013introduction} to the continuous-time NIMFA (\ref{NIMFA_continuous}) with sampling time $T\ge 0$, and the spreading parameters of (\ref{NIMFA_disc_}) follow as $\delta_T = \delta T$ and $\beta_T = \beta T$. Equivalently to (\ref{NIMFA_disc_}), we can represent the discrete-time NIMFA model as vector equations
\begin{align}
v [k + 1] & = (1 - \delta_T) v[k] + \beta_T \textrm{diag}(u - v[k]) A v[k],\label{NIMFA_disc_stacked}
\end{align}
where the viral state vector equals $v[k] = (v_1 [k], ..., v_N [k])^T$ and $u$ is the all-one vector. Originally \cite{van2009virus}, the NIMFA model (\ref{NIMFA_continuous}) was proposed as an approximation of the susceptible-infected-susceptible (SIS) epidemic model \cite{pastor2015epidemic}. Here, we consider the NIMFA model (\ref{NIMFA_disc_}), and not the SIS model, as an exact description of the viral state evolution of a set of $N$ groups, which is justified by the validation of Par\'e \textit{et al.} \cite{pare2018analysis}. Our focus is the network reconstruction problem for NIMFA viral state observations: \textit{Given an observed viral state sequence $v[0], ..., v[n-1]$ of observation length $n$, can we estimate the adjacency matrix $A$ and the spreading parameters $\beta_T$, $\delta_T$? }

In Section \ref{sec:related_work}, we review related work. The nomenclature is introduced in Section \ref{sec:nomenclature}, and we briefly review sparse parameter estimation in Section \ref{sec:lasso}. We analyse the viral state dynamics for groups of individuals in Section \ref{sec:viral_dynamics}. The network reconstruction and spreading parameter estimation method is introduced in Section \ref{sec:network_reconstruction}. Numerical evaluations in Section \ref{sec:numerical_evaluation} demonstrate the performance of the network reconstruction method.

\section{Related Work}
\label{sec:related_work}
Par\'e \textit{et al.} \cite{pare2018analysis} analysed the equilibria of the discrete-time NIMFA model (\ref{NIMFA_disc_stacked}) and validated the dynamics of real-world epidemics, when the nodes of the network corresponds to groups of individuals, namely either households or counties.

Recently, estimation methods were proposed \cite{prasse2018exact, vajdi2017missing, shen2014reconstructing} to reconstruct the network from viral state observations of susceptible-infected-susceptible (SIS) epidemic models . The maximum-likelihood SIS network reconstruction problem is NP-hard \cite{prasse2018nphardness}, and the number of required viral state observations $n$ seems \cite{prasse2018exact} to grow (almost) exponentially with respect to the network size $N$. For the NIMFA model (\ref{NIMFA_disc_stacked}), Par\'e \textit{et al.} \cite{pare2018analysis} proposed a method to estimate the spreading parameters $\beta_T$ and $\delta_T$ under the assumption that the adjacency matrix $A$ is known exactly.

The network reconstruction method in this work is motivated by two factors. First, the tremendous number of required viral state observations and the NP-hardness seem to render the exact SIS network reconstruction hardly viable, and modelling the viral dynamics by the NIMFA equations (\ref{NIMFA_continuous}) may allow for a feasible network reconstruction problem. Second, we generalise the spreading parameter estimation method \cite{pare2018analysis} by also estimating the adjacency matrix $A$ of the underlying contact network.
 
\section{Nomenclature}
\label{sec:nomenclature} 
The $\ell_q$-``norm''\footnote{The notation $\lVert z \rVert_q$ is commonly used for any $q\ge 0$, but $\lVert z \rVert_q$ should not be mistaken for a norm if $0 \le q<1$ since $\lVert z \rVert_q$ does not satisfy the triangle inequality if $0 \le q<1$.} of an $N \times 1$ vector $z$ is given by
\begin{align*}
\left\lVert z \right\rVert_q = \left( \left| z_1 \right|^q + ... + \left| z_N \right|^q\right)^{1/q}
\end{align*}
for any $q>0$. For $q=0$, we denote the number of non-zero elements of an $N \times 1$ vector $z$ (the $\ell_0$-``norm'') by $\left\lVert z \right\rVert_0$, and we call a vector $z$ to be $s$-sparse if $z$ has at most $s$ non-zero components. The cardinality of a finite set $S$ is denoted by $|S|$. We denote the set of natural numbers which are smaller or equal to $i$ as $\mathbb{N}_i = \{1, 2, ..., i\}$. For an $m \times n$ matrix $M$, we denote the columns by $M_l \in \mathbb{R}^m$ for $l = 1, ..., n$, and the $mn \times 1$ vector $M_\textrm{vec} = (M^T_1, ..., M^T_n)^T$ is obtained by concatenating the columns of the matrix $M$ in one vector. The truncated singular value decomposition \cite{hansen1990truncated} of a real $m \times n$ matrix $M$ for some integer $r \le \textrm{min}\{m, n\}$ is given by $\tilde{M}_r = U_{r} S_{r} Q_{r}^T$ where $S_{r} = \textrm{diag}(\sigma_1, ..., \sigma_{r})\in \mathbb{R}^{r \times r}$ is the diagonal matrix with the $r$ greatest singular values of $M$, and the matrices $U_{r}\in \mathbb{R}^{m \times r}$ and $Q_{r}\in \mathbb{R}^{n \times r}$ are composed of the first $r$ left-singular and right-singular vectors of $M$, respectively.
 
A graph $G$ consists of $N$ nodes and $L$ links. The degree $d_i$ of a node $i$ equals $d_i = \sum^N_{j=1}a_{i j}$. We denote the largest eigenvalue of the adjacency matrix $A$ by $\lambda_1$. The effective infection rate is defined as $\tau = \frac{\beta}{\delta}$. The maximum number of links for an undirected graph with $N$ nodes and without self-loops equals $L_\textrm{max} = N(N-1)/2$. The $L_\textrm{max} \times 1$ link vector $x$ denotes the presence or absence of a link $l = 1, ..., L_\textrm{max}$ by $x_l = 1$ and $x_l = 0$, respectively, and the weighted link vector equals $w = \beta_T x$. Table \ref{table:nomenclature} summarises the nomenclature.

\begin{center}
  \begin{table}
  \centering
  \begin{tabular}{ | l | l | }
    \hline   
 $A \otimes B$ & Kronecker product of the matrices $A$, $B$ \\ \hline   
$\beta_T$ & Infection rate $\beta_T = T \beta$ \\ \hline    
$d_i$ & Degree of node $i$\\ \hline     
$\delta_T$ & Curing rate and sampled-time curing intensity $\delta_T = T \delta$ \\ \hline     
 $\textrm{diag}(x)$ & For a vector $x \in \mathbb{R}^N$, $\textrm{diag}(x)$ is the $N \times N$ diagonal matrix with $x$ on its diagonal \\ \hline   
$I_N$ & The $N \times N$ identity matrix \\ \hline       
    $L_\textup{\textrm{max}}$ & Maximum number of links for a network with $N$ nodes, $L_\textup{\textrm{max}} = N(N-1)/2$\\ \hline
    $\mathbb{N}_i$ & Set of natural numbers not greater than $i$, $\mathbb{N}_i = \{1, 2, ..., i\}$ \\ \hline
 $N$& Number of nodes\\ \hline
    $n$& Number of time instants which were observed, $n \in \mathbb{N}$\\ \hline
    $T$& Sampling time of the discrete-time NIMFA model\\ \hline
 $u$ & All-one vector $u = (1, ..., 1)^T \in \mathbb{R}^N$ \\ \hline    
$v[k]$ & Viral state at discrete time $k$, $v_i[k] \in [0, 1]$ \\ \hline
$V$ & Viral state matrix $V = (v[0], ..., v[n-1])^T \in \mathbb{R}^{n \times N}$ \\ \hline
$x$ & Link vector, $x_l = 1$ and $x_l = 0$ denote the presence or absence of link $l \in \mathbb{N}_{L_\textup{\textrm{max}}}$ \\ \hline
$\hat{w}(\epsilon)$ & Estimate of the link vector $x$ if the link-threshold equals $\epsilon$\\ \hline
$w$ & Weighted link vector $w = \beta_T x$ \\ \hline
    $\left\lVert z \right\rVert_q$ & $\ell_q$-``norm'' of the vector $z$ where $q \ge 0$\\ \hline
  \end{tabular}
  \caption{Nomenclature \label{table:nomenclature}} 
  \end{table}
\end{center}

\section{Sparse Parameter Estimation} 
 \label{sec:lasso}  
Consider the problem of estimating a parameter vector $\alpha = (\alpha_1, ..., \alpha_p)^T$ by a number $m$ of linear measurements or observations. The vector of measurements $y = (y_1, ..., y_m)^T$ follows from the linear system
\begin{align}\label{eq:lin_sys_cs}
y = X \alpha,
\end{align}
where $X$ is a given $m \times p$ measurement matrix with $p\ge m$. The rows of the measurement matrix $X$ are assumed to be linearly independent, so that $\textrm{rank}(X) = m$. If the number of measurements $m$ equals the number of elements $p$ of the parameter vector $\alpha$, then $\alpha$ can be obtained uniquely from the measurements $y$. 

In many applications, the following two facts hold regarding the linear system (\ref{eq:lin_sys_cs}). First, the number of linearly independent measurements $m$ is (possibly considerably) smaller than the number $p$ of parameters $\alpha_1, ..., \alpha_p$, and hence the linear system (\ref{eq:lin_sys_cs}) is underdetermined. Second, the parameter vector $\alpha$ often is sparse, i.e. the number of non-zero elements of $\alpha$ is significantly smaller than the total number $p$ of elements of $\alpha$. In the following, we introduce some selected results on sparse parameter estimation, which show that it is possible to estimate the parameter vector $\alpha$ accurately with a number $m<p$ of measurements by exploiting the sparsity of $\alpha$.

 If the parameter vector $\alpha$ is known to be sparse, then we can choose to estimate $\alpha$ as \textit{the sparsest} solution of the underdetermined linear system (\ref{eq:lin_sys_cs}) by solving 
\begin{align} \label{eq:best_subset_selection}
\hat{\alpha} = \text{arg }\underset{\alpha}{\text{min}}  \quad \left\lVert \alpha \right\rVert_0 \quad \text{subject to} \quad  y = X \alpha
\end{align}
 The estimation problem (\ref{eq:best_subset_selection}) suffers from the disadvantage that its objective function $\left\lVert \alpha \right\rVert_0$ is non-convex, and computing the solution to (\ref{eq:best_subset_selection}) in general is very difficult. A commonly employed approach is to replace the objective in (\ref{eq:best_subset_selection}) by the $\ell_1$-norm $\lVert \alpha \rVert_1$, which has two decisive advantages \cite{hastie2015statistical} over $\ell_q$-``norms'' with $q \neq 1$. First, in contrast to the non-convex $\ell_q$-``norms'' $\left\lVert \alpha \right\rVert_q$ with $0\le q<1$, the $\ell_1$-norm is convex, which enables solving the resulting optimisation problem efficiently \cite{boyd2004convex}. Second, among the convex $\ell_q$-norms $\left\lVert \alpha \right\rVert_q$ with $q\ge 1$, the $\ell_1$-norm is the best approximation of the $\ell_0$-``norm'' $\lVert \alpha \rVert_0$. 
 
Replacing the objective in (\ref{eq:best_subset_selection}) by the $\ell_1$-norm yields the optimisation problem
\begin{align} \label{eq:min_l1_norm_constrained_no_noise}
\hat{\alpha} = \text{arg }\underset{\alpha}{\text{min}}  \quad \left\lVert \alpha \right\rVert_1 \quad \text{subject to} \quad y  = X \alpha,
\end{align}
which is coined \textit{basis pursuit} and has been introduced by Chen and Donoho \cite{chen2001atomic}. 

What makes the basis pursuit method (\ref{eq:min_l1_norm_constrained_no_noise}) so interesting for estimating the parameter vector $\alpha$? Even if the number of measurements $y_1, ..., y_m$ is significantly smaller than the number of parameters $\alpha_1, ..., \alpha_p$, the solution to the basis pursuit problem (\ref{eq:min_l1_norm_constrained_no_noise}) provably equals the true parameter vector $\alpha$, provided that $\alpha$ is sufficiently sparse. To present one particular (selected) theorem on the performance of the basis pursuit method, we first introduce the \emph{restricted isometry property} (RP) of the measurement matrix $X$.
\begin{definition}[Restricted Isometry Property \cite{candes2008restricted}]\label{definition:restricted_isometry_property}
An $m \times p$ matrix $X$ satisfies the \emph{restricted isometry property} (RP) of order $s$ if there exists a $\delta_s \in (0, 1)$ such that
\begin{align*}
(1 - \delta_s)\lVert z \rVert^2_2 \le \lVert Xz \rVert^2_2 \le (1 + \delta_s)\lVert z \rVert^2_2,
\end{align*}
holds for all $s$-sparse vectors $z$.
\end{definition}
Hence, if the measurement matrix $X$ satisfies the RP of order $s$, then the length of any $s$-sparse vector $z$ stays almost (up to a relative deviation of $\delta_s$) constant under the linear map given by the matrix $X$. Theorem \ref{theorem:restricted_isometry_property} relates the RP of the measurement matrix $X$ to the solution $\hat{\alpha}$ of the basis pursuit (\ref{eq:min_l1_norm_constrained_no_noise}) and is due to Cand\`{e}s \cite{candes2008restricted}.

\begin{theorem}[]\label{theorem:restricted_isometry_property}
Suppose that the $m \times p$ measurement matrix $X$ satisfies the RP of order $2s$ with $\delta_{2s} <\sqrt{2} - 1$. Then, the solution $\hat{\alpha}$ to (\ref{eq:min_l1_norm_constrained_no_noise}) satisfies
\begin{align} \label{theorem:rip_bound}
\lVert \hat{\alpha} - \alpha \rVert_2 \le c_0 s^{-1/2} \lVert \alpha - \alpha^{(s)} \rVert_1
\end{align}
for some constant $c_0$, and where the $s$-sparse vector $\alpha^{(s)}$ follows from the parameter vector $\alpha$ by setting all but the largest $s$ components of $\alpha$ to zero.
\end{theorem}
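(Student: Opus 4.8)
The plan is to follow the now-standard restricted-isometry argument. Write $h = \hat{\alpha} - \alpha$ for the estimation error. Since $\alpha$ is feasible for (\ref{eq:min_l1_norm_constrained_no_noise}) while $\hat{\alpha}$ is optimal, I immediately obtain the two facts that drive everything: $Xh = X\hat{\alpha} - X\alpha = 0$, so $h$ lies in the null space of $X$, and $\lVert \hat{\alpha}\rVert_1 \le \lVert \alpha\rVert_1$. First I would set up the bucketing. Let $T_0$ be the index set of the $s$ largest-magnitude entries of $\alpha$, so that $\alpha_{T_0^c} = \alpha - \alpha^{(s)}$. Sorting the entries of $h$ restricted to $T_0^c$ by decreasing magnitude and splitting $T_0^c$ into consecutive blocks $T_1, T_2, \dots$ of size $s$ each, the optimality $\lVert \hat{\alpha}\rVert_1 \le \lVert \alpha\rVert_1$, expanded by the triangle inequality separately on $T_0$ and on $T_0^c$, yields the cone inequality
\begin{align*}
\lVert h_{T_0^c}\rVert_1 \le \lVert h_{T_0}\rVert_1 + 2\lVert \alpha - \alpha^{(s)}\rVert_1.
\end{align*}

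Next I would control the tail blocks. A standard averaging estimate, in which each entry of block $T_{j+1}$ is no larger than the mean magnitude on $T_j$, gives $\lVert h_{T_{j+1}}\rVert_2 \le s^{-1/2}\lVert h_{T_j}\rVert_1$; summing over $j \ge 1$ produces
\begin{align*}
\sum_{j\ge 2}\lVert h_{T_j}\rVert_2 \le s^{-1/2}\lVert h_{T_0^c}\rVert_1.
\end{align*}

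The crux is bounding $\lVert h_{T_0\cup T_1}\rVert_2$ by means of the restricted isometry property of Definition \ref{definition:restricted_isometry_property}. Since $Xh = 0$, I have $Xh_{T_0\cup T_1} = -\sum_{j\ge 2}Xh_{T_j}$. Applying the lower bound of the RP to the $2s$-sparse vector $h_{T_0\cup T_1}$ and expanding the resulting inner product, the key technical ingredient is the near-orthogonality consequence of the RP: for vectors $z, z'$ supported on disjoint index sets of total size at most $2s$, one has $|\langle Xz, Xz'\rangle| \le \delta_{2s}\lVert z\rVert_2\lVert z'\rVert_2$, which follows from the parallelogram identity applied to $X(z\pm z')$. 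Together with the two-term estimate $\lVert h_{T_0}\rVert_2 + \lVert h_{T_1}\rVert_2 \le \sqrt{2}\,\lVert h_{T_0\cup T_1}\rVert_2$, this delivers
\begin{align*}
\lVert h_{T_0\cup T_1}\rVert_2 \le \rho\, s^{-1/2}\lVert h_{T_0^c}\rVert_1, \qquad \rho = \frac{\sqrt{2}\,\delta_{2s}}{1 - \delta_{2s}}.
\end{align*}

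Finally I would close the loop. The hypothesis $\delta_{2s} < \sqrt{2}-1$ is precisely what forces the contraction factor $\rho < 1$. Substituting the cone inequality together with $\lVert h_{T_0}\rVert_1 \le \sqrt{s}\,\lVert h_{T_0\cup T_1}\rVert_2$ into the previous bound removes the self-reference and, solving for $\lVert h_{T_0\cup T_1}\rVert_2$, bounds it by a multiple of $s^{-1/2}\lVert \alpha - \alpha^{(s)}\rVert_1$; the full error then follows from $\lVert h\rVert_2 \le \lVert h_{T_0\cup T_1}\rVert_2 + \sum_{j\ge 2}\lVert h_{T_j}\rVert_2$, assembling (\ref{theorem:rip_bound}) with an explicit constant $c_0$ depending only on $\delta_{2s}$. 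I expect the main obstacle to be this third step: establishing the near-orthogonality lemma from the RP and threading the constants so that the contraction factor is exactly $\rho$, since this is where the threshold $\sqrt{2}-1$ is forced and where any loose estimate would break the recovery guarantee.
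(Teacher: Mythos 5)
Your proposal is correct: it reconstructs, step for step, the standard argument of Cand\`{e}s (the cone constraint from $\ell_1$-optimality, the block decomposition of $T_0^c$, the near-orthogonality lemma via the parallelogram identity, and the contraction with $\rho = \sqrt{2}\,\delta_{2s}/(1-\delta_{2s}) < 1$ forcing the threshold $\delta_{2s} < \sqrt{2}-1$). The paper itself does not prove Theorem \ref{theorem:restricted_isometry_property} but imports it by citation from \cite{candes2008restricted}, and your argument is essentially the proof given in that cited source.
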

Theorem \ref{theorem:restricted_isometry_property} is remarkable: If the parameter vector $\alpha$ is $s$-sparse, then it holds that $\alpha = \alpha^{(s)}$ and the estimation by (\ref{eq:min_l1_norm_constrained_no_noise}) is exact. Furthermore, Theorem \ref{theorem:restricted_isometry_property} implies the so-called \textit{oracle property}: Even if we knew (e.g. an oracle told us) which of the $p$ components of the parameter vector $\alpha$ are the $s$ largest, we could not improve the error bound (\ref{theorem:rip_bound}) by more than a constant value, and we refer to \cite{van2009conditions} for more details.

Thus, the RP\footnote{Similar results to Theorem \ref{theorem:restricted_isometry_property} can be derived based on the \textit{null space property} or the \textit{coherence} of the measurement matrix $X$, and both properties are related to the RP.}, and not the rank, of the measurement matrix $X$ guarantees the exact estimation of a sparse parameter vector $\alpha$. There are similarly elegant variations \cite{davenport2011introduction} of Theorem \ref{theorem:restricted_isometry_property} which rely on the RP of the measurement matrix $X$. The downside of the RP is that certifying whether a given matrix $X$ obeys the RP for a given order $2s$ and parameter $\delta_{s}$ is NP-hard \cite{bandeira2013certifying}. However, if the measurement matrix $X$ is generated at random, then the matrix $X$ satisfies the RP \emph{with very high probability} for various probability distributions \cite{candes2008introduction}, provided that the number of measurements $m$ satisfies
 \begin{align*}
 m \ge c s \log\left(\frac{p}{s}\right)
 \end{align*}
for some constant $c$. Hence, the number $m$ of observations grows only logarithmically with respect to the number $p$ of parameters $\alpha_1, ...,\alpha_p$ to recover an $s$-sparse parameter vector $\alpha$ exactly (with a high probability) if the measurement matrix $X$ is generated according to certain distributions.

If the measurements are subject to an additive error vector $e$, then the linear measurements (\ref{eq:lin_sys_cs}) become $y = X \alpha + e$. For the error-corrupted linear system $y = X \alpha + e$, there are variations of the basis pursuit optimisation problem (\ref{eq:min_l1_norm_constrained_no_noise}) with similar guarantees of an accurate recovery of the parameter vector $\alpha$. Most prominently, the least absolute shrinkage and selection operator (lasso) \cite{tibshirani1996regression}, or alternatively basis pursuit denoising, has found a wide application in statistics. In summary, a sparse signal can be obtained from significantly fewer samples than dictated by the Nyquist-Shannon sampling theorem, which gave rise to the whole field of \textit{compressed sensing}: Instead of acquiring a signal of tremendous dimension (e.g., a high resolution picture) and strongly compressing the signal in a subsequent step, compressed sensing aims to directly acquire the signal in compressed form. We refer the reader to \cite{hastie2015statistical, davenport2011introduction, candes2008introduction} for further details and theoretical results on sparse parameter estimation and related topics.

 \section{The Viral State Dynamics}
 \label{sec:viral_dynamics}  
When the viral state $v_i$ of a node $i$ corresponds to a group of individuals, it is a realistic assumption that the initial viral state $v_i[0]$, or the fraction of infected individuals of group $i$, is close to the healthy state. (For instance, consider the spread of a novel trend or tweet). We show below that if the initial viral state $v_i[1]$ of every node $i$ is sufficiently close to zero and $\tau > \tau^{(1)}_c$, then it holds $v_i[1] \le v_{\infty, i}$ for every node $i$, where $v_{\infty, i}$ is the steady-state of node $i$. 
  
If the effective infection rate satisfies $\tau < \frac{1}{\lambda_1}$, then the zero viral state is the only equilibrium \cite{van2014performance}, which is stable. If the effective infection rate satisfies $\tau \ge \frac{1}{\lambda_1}$, then the discrete-time NIMFA model (\ref{NIMFA_disc_stacked}) has an unstable equilibrium at the zero viral state and a stable non-zero equilibrium $v_\infty$. For heterogeneous spreading parameters, i.e. when the infection rate between two nodes $i$ and $j$ is $\beta_{ij} a_{ij}$ instead of $\beta a_{ij}$, similar results on the equilibria were derived \cite{pare2018analysis}. If the viral state $v[k]$ is close to zero, then the NIMFA equations (\ref{NIMFA_disc_stacked}) can be linearised around zero:
\begin{align}\label{eq:LTI_sys_NIMFA_zero}
v [k + 1] \approx \left((1 - \delta_T) I_N + \beta_T A\right) v[k] \quad \text{if} \quad v[k] \approx 0
\end{align}
In many applications, it is reasonable that most components of the initial viral state $v[0]$ are zero and only a few components are positive and small, and, hence, the approximation of the NIMFA equations (\ref{NIMFA_disc_stacked}) by the linear time-invariant (LTI) system (\ref{eq:LTI_sys_NIMFA_zero}) is accurate for small times $k$.

 The discrete-time NIMFA equations (\ref{NIMFA_disc_stacked}) can be transformed to an equivalent discrete-time system whose state at time $k$ is given by the difference of the viral state $v[k]$ to the steady state $v_\infty$, which we denote by $\Delta v[k] =v[k] - v_\infty$. 
\begin{proposition}[NIMFA Equations as Difference to the Steady-State]\label{proposition:delta_v_equation}
If $\tau \ge \frac{1}{\lambda_1}$, then the difference $\Delta v[k]$ from the viral state $v[k]$ to the steady state $v_\infty$ of the discrete-time NIMFA model (\ref{NIMFA_disc_stacked}) evolves according to
\begin{align} \label{delta_v_system}
\Delta v[k+1] & = F \Delta v[k]-\beta_T   \textup{\textrm{diag}}(\Delta v[k] )  A  \Delta v[k]   
\end{align}
where the $N\times N$ matrix $F$ is given by
\begin{align*}
F = I_N + \textup{\textrm{diag}}\left( \frac{\delta_T}{ v_{\infty, 1} - 1}, ...,  \frac{\delta_T}{ v_{\infty, N} - 1}\right) + \beta_T  \textup{\textrm{diag}}(u- v_{\infty} ) A
\end{align*}
\end{proposition}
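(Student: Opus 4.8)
The plan is to substitute the decomposition $v[k] = v_\infty + \Delta v[k]$ directly into the discrete-time NIMFA dynamics (\ref{NIMFA_disc_stacked}) and then simplify using the defining equation of the steady state. First I would record the steady-state identity: since $v_\infty$ is an equilibrium of (\ref{NIMFA_disc_stacked}), it satisfies $v_\infty = (1 - \delta_T) v_\infty + \beta_T \textrm{diag}(u - v_\infty) A v_\infty$, which rearranges to $\delta_T v_\infty = \beta_T \textrm{diag}(u - v_\infty) A v_\infty$. The hypothesis $\tau \ge 1/\lambda_1$ is what guarantees that such a non-zero steady state $v_\infty$ exists, so this identity is available.

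Next I would expand. Writing $v[k+1] = v_\infty + \Delta v[k+1]$ and $v[k] = v_\infty + \Delta v[k]$ and inserting both into (\ref{NIMFA_disc_stacked}), the bilinear term $\textrm{diag}(u - v_\infty - \Delta v[k])\, A\, (v_\infty + \Delta v[k])$ expands into four pieces. The constant piece $(1-\delta_T)v_\infty + \beta_T \textrm{diag}(u-v_\infty)A v_\infty$ equals $v_\infty$ by the steady-state identity and therefore cancels against the $v_\infty$ appearing on the left-hand side. The purely quadratic piece $-\beta_T \textrm{diag}(\Delta v[k])\, A\, \Delta v[k]$ is already exactly the nonlinear term appearing in (\ref{delta_v_system}), so it suffices to show that the remaining terms that are linear in $\Delta v[k]$ collapse into $F \Delta v[k]$.

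Those linear terms are $(1-\delta_T)\Delta v[k] + \beta_T \textrm{diag}(u - v_\infty) A \Delta v[k] - \beta_T \textrm{diag}(\Delta v[k]) A v_\infty$. The middle term $\beta_T \textrm{diag}(u - v_\infty) A \Delta v[k]$ already matches the last summand of $F$. The crucial manipulation is the elementwise-product identity $\textrm{diag}(a)b = \textrm{diag}(b)a$, which converts $\textrm{diag}(\Delta v[k]) A v_\infty = \textrm{diag}(A v_\infty)\Delta v[k]$, turning the last term into a diagonal matrix acting on $\Delta v[k]$. It then remains to verify componentwise that $(1 - \delta_T) - \beta_T (A v_\infty)_i = 1 + \frac{\delta_T}{v_{\infty,i}-1}$ for each $i$; substituting the scalar form $\beta_T (A v_\infty)_i = \frac{\delta_T v_{\infty,i}}{1 - v_{\infty,i}}$ of the steady-state identity and simplifying the left-hand side to $\frac{-\delta_T}{1-v_{\infty,i}}$ confirms the equality.

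The main obstacle is more bookkeeping than conceptual: keeping the four-fold expansion of the bilinear term organized and applying $\textrm{diag}(a)b = \textrm{diag}(b)a$ at the right moment so that the steady-state relation produces precisely the diagonal entries $\frac{\delta_T}{v_{\infty,i}-1}$ of $F$. Care is also needed that $v_{\infty,i} \neq 1$ so that this division is well-defined, which holds because each steady-state fraction of infected individuals satisfies $v_{\infty,i} < 1$.
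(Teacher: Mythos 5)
Your proposal is correct and follows essentially the same route as the paper's proof: substitute $v[k] = v_\infty + \Delta v[k]$ into the NIMFA dynamics, expand the bilinear term, cancel the constant part via the steady-state equation, and use the rearranged steady-state identity $\beta_T (A v_\infty)_i = \delta_T v_{\infty,i}/(1 - v_{\infty,i})$ to collapse the linear coefficient into the diagonal entries $\delta_T/(v_{\infty,i}-1)$ of $F$. The only difference is presentational—you work in vector notation with the $\textrm{diag}(a)b = \textrm{diag}(b)a$ swap, while the paper carries out the identical expansion componentwise—so the two arguments coincide step for step.
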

\begin{proof}
Appendix \ref{appendix:difference_eq}.
\end{proof} 

For $k \rightarrow \infty$, the viral state $v[k]$ converges to the steady-state $v_\infty$ (if $\tau \ge \frac{1}{\lambda_1}$), which implies $\Delta v[k] \rightarrow 0$ for $k \rightarrow \infty$. We obtain 
\begin{align} \label{eq:LTI_sys_NIMFA_ss}
\Delta v[k+1] \approx F \Delta v[k]
\end{align}
for large $k$ by linearising (\ref{delta_v_system}) around the stable equilibrium $\Delta v[k] = 0$. Thus, for applications in which the initial viral state $v[0]$ is close to zero, we can describe the NIMFA equations by LTI systems in two different regimes: For either small or large times $k$, the NIMFA equations (\ref{NIMFA_disc_stacked}) can be approximated by the LTI systems (\ref{eq:LTI_sys_NIMFA_zero}) and (\ref{eq:LTI_sys_NIMFA_ss}), respectively. Furthermore, we obtain the following corollary from Proposition \ref{proposition:delta_v_equation}:
\begin{corollary}[]\label{corollary:below_steady_state}
If the sampling time $T$ satisfies
\begin{align}\label{eq:bound_sampling_time}
T \le \frac{1}{\delta  +\beta  d_i} 
\end{align}
for every node $i$, then it holds
\begin{align}\label{eq_below_steady_state}
v_i[0] \le v_{\infty, i} ~ \forall i \in \mathbb{N}_N \Rightarrow v_i[k] \le v_{\infty, i} ~ \forall k \ge 1, i \in \mathbb{N}_N 
\end{align}
and, analogously,
\begin{align*}
v_i[0] \ge v_{\infty, i} ~ \forall i \in \mathbb{N}_N \Rightarrow v_i[k] \ge v_{\infty, i} ~ \forall k \ge 1, i \in \mathbb{N}_N 
\end{align*}
\end{corollary}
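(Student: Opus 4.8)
The plan is to read the right-hand side of the discrete-time NIMFA recursion (\ref{NIMFA_disc_}) as a single map $f:[0,1]^N\to\mathbb{R}^N$ and to show that, under the stated sampling-time bound, $f$ is order-preserving on the unit box; the corollary then drops out by induction. First I would rewrite the hypothesis (\ref{eq:bound_sampling_time}): multiplying $T\le 1/(\delta+\beta d_i)$ by the positive quantity $\delta+\beta d_i$ and substituting $\delta_T=\delta T$ and $\beta_T=\beta T$ turns it into the equivalent discrete-parameter condition $\delta_T+\beta_T d_i\le 1$ for every $i\in\mathbb{N}_N$. I would then introduce, for $i\in\mathbb{N}_N$,
\begin{align*}
f_i(v)=(1-\delta_T)v_i+\beta_T(1-v_i)\sum_{j=1}^N a_{ij}v_j,
\end{align*}
so that the recursion reads $v[k+1]=f(v[k])$, and record the three anchor facts $f(0)=0$, $f(v_\infty)=v_\infty$ (the steady state is a fixed point), and $f(u)=(1-\delta_T)u$, which lies in $[0,u]$ because $0\le\delta_T\le 1$.

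The core step is monotonicity of $f$ on $[0,1]^N$, namely that $v\le w$ componentwise implies $f(v)\le f(w)$. I would verify this via the Jacobian. For $j\ne i$ the off-diagonal entry $\partial f_i/\partial v_j=\beta_T(1-v_i)a_{ij}$ is nonnegative on the box since $v_i\le 1$ and $a_{ij}\ge 0$. The diagonal entry is $\partial f_i/\partial v_i=(1-\delta_T)-\beta_T\sum_j a_{ij}v_j$, and this is exactly where the hypothesis is needed: from $0\le v_j\le 1$ we get $\sum_j a_{ij}v_j\le d_i$, so the diagonal entry is bounded below by $1-(\delta_T+\beta_T d_i)\ge 0$. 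Because $[0,1]^N$ is convex, the mean-value theorem (integrating the Jacobian along the segment from $v$ to $w$, which stays in the box) yields $f(w)-f(v)\ge 0$ componentwise. I expect this sign check on the diagonal entry to be the only genuine obstacle; it is the single place where the sampling-time bound is indispensable, and it is also what forces the argument to keep the state inside $[0,1]^N$ so that the estimate $\sum_j a_{ij}v_j\le d_i$ remains valid at every step.

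With monotonicity established, invariance of the box is immediate: for any $v\in[0,1]^N$ one has $0=f(0)\le f(v)\le f(u)\le u$, so $f([0,1]^N)\subseteq[0,1]^N$. For the first implication, assume $v_i[0]\le v_{\infty,i}$ for all $i$; since viral states lie in $[0,1]$ we also have $v[0]\ge 0$, hence $0\le v[0]\le v_\infty$. Applying $f$ and using the fixed points gives $0=f(0)\le v[1]=f(v[0])\le f(v_\infty)=v_\infty$, and iterating yields $v[k]\le v_\infty$ for all $k\ge 1$, which is (\ref{eq_below_steady_state}). The reverse implication is symmetric: from $v_\infty\le v[0]\le u$ monotonicity gives $v_\infty=f(v_\infty)\le v[1]=f(v[0])\le f(u)\le u$, and induction delivers $v_i[k]\ge v_{\infty,i}$ for all $k\ge 1$ and $i\in\mathbb{N}_N$.
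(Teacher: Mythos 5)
Your proof is correct, but it takes a genuinely different route from the paper. The paper does not work with the map $f$ directly: it invokes Proposition \ref{proposition:delta_v_equation}, rewrites the exact difference dynamics as the affine identity $\Delta v_i[k+1]=g_i[k]+h_i[k]\,\Delta v_i[k]$ with $g_i[k]=\beta_T(1-v_{\infty,i})\sum_j a_{ij}\Delta v_j[k]$ and $h_i[k]=1-\delta_T-\beta_T\sum_j a_{ij}v_j[k]$, and then argues by induction that both terms have the right sign when all $\Delta v_j[k]$ do. You instead prove a stronger structural fact: under $\delta_T+\beta_T d_i\le 1$ the NIMFA update is a cooperative (order-preserving) map on $[0,1]^N$, and the corollary follows by sandwiching the trajectory between the fixed points $0$ (resp.\ $u$, noting $f(u)=(1-\delta_T)u\le u$) and $v_\infty$. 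The two arguments hinge on the identical inequality — your diagonal Jacobian entry $(1-\delta_T)-\beta_T\sum_j a_{ij}v_j\ge 1-\delta_T-\beta_T d_i\ge 0$ is exactly the paper's $h_i[k]$ evaluated along the trajectory — but the trade-offs differ: the paper's computation is purely algebraic (no calculus), reuses machinery it already established, and stays specific to the comparison with $v_\infty$; your argument needs the integral form of the mean-value theorem (the vector-valued MVT itself fails, but integrating the Jacobian along the segment, as you do, is the correct fix) and in exchange gives more, namely that \emph{any} two componentwise-ordered initial states in the box generate ordered trajectories, which makes both implications and the invariance of the order intervals $[0,v_\infty]$ and $[v_\infty,u]$ fall out at once. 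One small point you share with the paper rather than resolve: both proofs quietly use $v_{\infty,i}\le 1$ (you need $v_\infty\in[0,1]^N$ to apply monotonicity at $v_\infty$; the paper needs it to sign $g_i[k]$), which is a standard property of the NIMFA steady state but deserves a citation or a line of justification.
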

\begin{proof}
Appendix \ref{appendix:below_steady_state}.
\end{proof} 

In other words, the set $\mathcal{V} = \{v | v_i \le v_{\infty, i}, ~\forall i \in \mathbb{N}_N\}$ is a \textit{positive invariant set} \cite{khalil1996nonlinear} of the NIMFA equations (\ref{NIMFA_disc_stacked}): If the viral state $v[0]$ is element of the set $\mathcal{V}$, then the viral state $v[k]$ will remain in the set $\mathcal{V}$ for $k \ge 0$. It is plausible that in most real-world spreading processes the viral state $v_i$ does not overshoot the steady-state $v_{\infty, i}$,  and Corollary \ref{corollary:below_steady_state} shows that the NIMFA model captures the realistic property of an absence of overshooting the steady-state $v_{\infty, i}$. We emphasise that Corollary \ref{corollary:below_steady_state} does not imply that the viral state $v[k]$ increases monotonically, as illustrated by Figure \ref{fig_not_monotonuous}.

\begin{figure}[h!]
	 \includegraphics[width=\textwidth]{./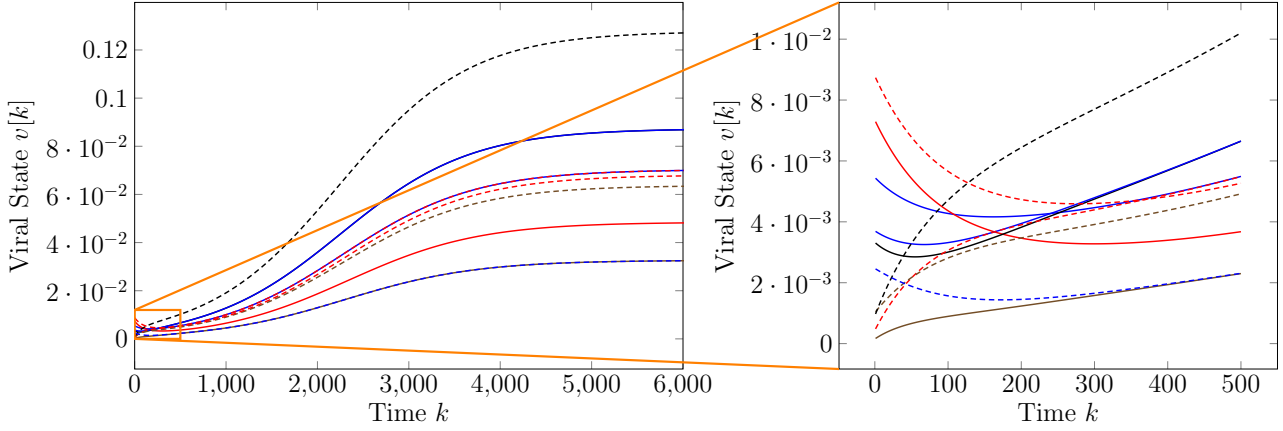}
	{\footnotesize\caption{The viral state traces $v[k]$ for a network with $N=10$ nodes. The initial viral states $v_i[1]$ were generated at random, independently and uniformly in the interval $[0, 0.01]$. Since $v_i[1] \le v_{\infty, i}$ for every node $i$, the implication (\ref{eq_below_steady_state}) is applicable. However, the viral states $v_i[k]$ are not monotonically increasing for every node $i$, as depicted by the right sub-plot. \label{fig_not_monotonuous}}}
\end{figure}

\section{Network Reconstruction and Spreading Parameter Estimation}
\label{sec:network_reconstruction}
We aim to estimate the adjacency matrix $A$ and the spreading parameters $\beta_T$ and $\delta_T$ given that the viral states $v_i[k]$ for all nodes $i=1, ..., N$ have been observed for the time instants $k=0, ..., n-1$. For simplicity\footnote{If the viral state $v_i[k]$ of a node $i$ does equal one for some time $k$, then our network reconstruction method can be adapted by discarding the observations of the respective time instants.}, we assume that $v_i[k] \neq 1$ for all nodes $i$ and all discrete times points $k$. 

In Subsection \ref{subsec:lin_equations}, we formulate the dependencies between the observed viral states $v[0], ..., v[n-1]$ as a set of equations which are linear with respect to the weighted adjacency matrix $\beta_T A$ and the curing rate $\delta_T$. We reformulate the linear system of Subsection \ref{subsec:lin_equations} in a more efficient manner in Subsection \ref{subsec:red_size}. Finally, the network reconstruction method is introduced in Subsection \ref{subsec:reconstruction_algorithm}

\subsection{Formulation as Linear System}
\label{subsec:lin_equations}
We show that the network reconstruction problem can be stated as set of equations which are linear in the curing rate $\delta_T$ and the $L_\textrm{max} \times 1$  weighted link vector $w$, where $w_l = \beta_T$ and $w_l = 0$ denotes the presence or absence of a link $l = 1, ..., {L_\textup{\textrm{max}}}$, respectively. More precisely, the NIMFA network reconstruction problem is of the form (\ref{eq:lin_sys_cs}): The parameter vector $\alpha$ is replaced by the $(L_\textrm{max} +1) \times 1$ vector $(w^T, \beta_T)^T$, and the measurements $y$ are given by a transform of the viral state observations $v[0], ...,  v[n-1] \in \mathbb{R}^N$.

We define the $N^2 \times 1$ vector $\beta_T A_\textrm{vec} = \beta_T (A^T_1, ..., A^T_N)^T$, which is obtained by concatenating the columns of the weighted adjacency matrix $\beta_T A$. Since the adjacency matrix $A$ is symmetric (i.e., $a_{ij} = a_{ji}$), we can express the $N^2 \times 1$ vector $\beta_T A_\textrm{vec}$ by the $L_\textup{\textrm{max}} \times 1$ weighted link vector $w$: For instance, for a network with $N=2$ nodes, it holds that $\beta_T A_\textrm{vec} = \beta_T (a_{11}, a_{21}, a_{12}, a_{22})^T$ and since $a_{12}= a_{21}$ we can write
\begin{align*}
\beta_T \begin{pmatrix}
a_{11}\\
a_{21} \\
a_{12}\\
a_{22}
\end{pmatrix} = \beta_T \begin{pmatrix}
1 & 0 & 0 \\
0 & 1 & 0 \\
0 & 1 & 0 \\
0 & 0 & 1 
\end{pmatrix}\begin{pmatrix}
a_{11}\\
a_{12}\\
a_{22}
\end{pmatrix}
\end{align*}
Furthermore, the graph has no self-loops (i.e. $a_{ii}=0$) and we obtain
\begin{align*}
 \beta_T \begin{pmatrix}
a_{11}\\
a_{21} \\
a_{12}\\
a_{22}
\end{pmatrix} = \begin{pmatrix}
0 \\
1 \\
1\\
0 
\end{pmatrix} w,
\end{align*}
where the weighted link vector equals $w = \beta_T a_{12}$ for a network with $N=2$ nodes. In general, there exists an $N^2 \times L_\textup{\textrm{max}}$ matrix $\Pi_N$ with zero-one entries for any number of nodes $N$ such that
\begin{align} \label{pi_mat}
\beta_T A_\textrm{vec} = \Pi_N w
\end{align}

To formulate the network reconstruction problem as linear system, we define three $n \times N$ matrices as
\begin{align}
B = \begin{pmatrix}
b^T[0]\\
\vdots \\ 
b^T[n-1]
\end{pmatrix}, \quad C = \begin{pmatrix}
c^T[0]\\
\vdots \\
c^T[n-1]
\end{pmatrix}, \quad
V = \begin{pmatrix}
v^T[0] \\
\vdots \\
v^T[n-1]
\end{pmatrix},\label{matriecsBCV}
\end{align} 
where 
\begin{align} \label{eq:def_bik}
b_i[k] &= \frac{v_i [k + 1] - v_i[k] }{1 - v_i[k]}
\end{align}
and
\begin{align}\label{eq:def_cik}
c_i[k] &= \frac{v_i[k]}{ v_i[k] -1},
\end{align}
respectively, for any time $k=0, ..., n-1$. Finally, we obtain Lemma \ref{lemma:reconstruction_as_linear_system}.

\begin{lemma}[Network Reconstruction as Linear System] \label{lemma:reconstruction_as_linear_system}
The viral state observations $v[0], ..., v[n-1]$, which were generated by the discrete-time NIMFA model (\ref{NIMFA_disc_stacked}), give rise to a set of linear equations for the $L_\textup{\textrm{max}} \times 1$ weighted link-vector $w = \beta_T (a_{12}, ..., a_{1N}, a_{23}, ..., a_{N-1, N})^T$ and the curing rate $\delta_T$:
\begin{align} \label{eq:lin_sys}
B_\textup{\textrm{vec}}  & =  M \begin{pmatrix} 
w \\
\delta_T
\end{pmatrix},
\end{align}
where the $n N \times ({L_\textup{\textrm{max}}}+1)$ matrix $M$ is given by
\begin{align} \label{eq:matrix_M}
M =  \begin{pmatrix}
I_N \otimes V & C_\textup{\textrm{vec}} 
\end{pmatrix}
\begin{pmatrix} 
\Pi_N & 0 \\
0 & 1
\end{pmatrix},
\end{align}
and the $N^2 \times {L_\textup{\textrm{max}}}$ matrix $\Pi_N$ is defined by (\ref{pi_mat}).
\end{lemma}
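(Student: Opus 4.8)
The plan is to derive, for each node $i$ and each observed time slot $k = 0, \ldots, n-1$, a single scalar equation that is linear in the unknowns $\beta_T a_{i1}, \ldots, \beta_T a_{iN}$ and $\delta_T$, and then to stack these $nN$ equations into the matrix form (\ref{eq:lin_sys}). First I would start from the scalar NIMFA recursion (\ref{NIMFA_disc_}), move $v_i[k]$ to the left-hand side, and divide through by $1 - v_i[k]$, which is permitted because we assume $v_i[k] \neq 1$. This isolates $b_i[k]$ as defined in (\ref{eq:def_bik}) on the left and produces on the right the term $\beta_T \sum_{j=1}^N a_{ij} v_j[k]$ together with $-\delta_T v_i[k]/(1 - v_i[k])$. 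Recognising the latter factor as $c_i[k]$ from (\ref{eq:def_cik}), the scalar identity becomes $b_i[k] = \beta_T \sum_{j=1}^N a_{ij} v_j[k] + \delta_T c_i[k]$, which is manifestly linear in the entries of $\beta_T A$ and in $\delta_T$.

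Next I would assemble these scalar identities into vectors. Collecting the equation over all times $k = 0, \ldots, n-1$ for a fixed node $i$ yields an $n$-vector identity whose left-hand side is the $i$-th column of $B$ and whose right-hand side contains $\beta_T V A_i$ (from the infection term) plus $\delta_T C_i$ (from the curing term), where $A_i$ and $C_i$ are the respective $i$-th columns. Here the symmetry $a_{ij} = a_{ji}$ is essential: the infection term involves the $i$-th \emph{row} of $A$, whereas the column-stacking convention behind $A_\textrm{vec}$ and the Kronecker structure supply the $i$-th \emph{column}, so symmetry is precisely what identifies $\sum_{j} a_{ij} v_j[k]$ with the entry of $V A_i$ corresponding to time $k$. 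Stacking the $N$ column identities on top of one another gives $B_\textrm{vec}$ on the left, while the block-diagonal action of $I_N \otimes V$ reproduces $\beta_T (I_N \otimes V) A_\textrm{vec}$ and the curing contributions combine into $\delta_T C_\textrm{vec}$. This establishes $B_\textrm{vec} = \begin{pmatrix} I_N \otimes V & C_\textrm{vec} \end{pmatrix} (\beta_T A_\textrm{vec}^T, \delta_T)^T$.

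Finally I would remove the redundancy in $A_\textrm{vec}$ caused by symmetry and by the absence of self-loops. Substituting $\beta_T A_\textrm{vec} = \Pi_N w$ from (\ref{pi_mat}) lets me write the combined parameter vector $(\beta_T A_\textrm{vec}^T, \delta_T)^T$ as the product of the block matrix appearing in (\ref{eq:matrix_M}), namely $\Pi_N$ padded by a trailing $1$, with $(w^T, \delta_T)^T$; composing this with the previous display produces exactly the matrix $M$ of (\ref{eq:matrix_M}) and the claimed system (\ref{eq:lin_sys}). The only genuinely delicate point is the bookkeeping in the vectorisation step, that is, keeping the node-index/time-index ordering consistent across $B_\textrm{vec}$, $C_\textrm{vec}$, and the diagonal blocks of $I_N \otimes V$, and invoking the symmetry of $A$ at exactly the right moment; the remaining manipulations are routine linear algebra.
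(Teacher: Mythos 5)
Your proposal is correct and follows essentially the same route as the paper's own proof: isolate $b_i[k] = \delta_T c_i[k] + \beta_T v^T[k] A_i$ by dividing the NIMFA recursion by $1 - v_i[k]$, stack over time to get $B_i = \begin{pmatrix} V & C_i \end{pmatrix} (\beta_T A_i^T, \delta_T)^T$, stack over nodes to obtain the $I_N \otimes V$ block structure, and substitute $\beta_T A_\textrm{vec} = \Pi_N w$. If anything, you are slightly more careful than the paper, which uses the symmetry $a_{ij} = a_{ji}$ silently when writing $v^T[k]\beta_T A_i$ and leaves the final $\Pi_N$ substitution implicit, whereas you state both explicitly.
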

\begin{proof}
 Appendix \ref{appendix:reconstruction_as_linear_system}.
\end{proof}
Since $\textrm{rank}(I_N \otimes V) = N\textrm{rank}(V)$ and $\textrm{rank}(AB) \le \textrm{min} \{\textrm{rank}(A), \textrm{rank}(B)\}$ for any two matrices $A, B$, we obtain an upper bound on the rank of the matrix $M$
\begin{align}\label{rank_M_upperBound}
\textrm{rank}(M) \le N\textrm{rank}(V) +1, 
\end{align}
 which is crucial to solving the linear system (\ref{eq:lin_sys}).

Subject to the non-convex constraints $w_l\in \{0, \beta_T\}$ for all weighted links $l = 1, ..., L_\textup{\textrm{max}}$, the linear system (\ref{eq:lin_sys}) may have a unique solution for $w$ and $\delta_T$ even if the matrix $M$ does not have full column rank, i.e. $\textrm{rank}(M) < L_\textrm{max}+1$. \textit{Nevertheless, solving (\ref{eq:lin_sys}) subject to $w_l\in \{0, \beta_T\}$ for all links $l = 1, ..., L_\textup{\textrm{max}}$ seems to be computationally infeasible, and, hence, we omit the non-convex constraints $w_l\in \{0, \beta_T\}$ in the following}. If the constraints $w_l\in \{0, \beta_T\}$ are omitted, then the set of linear equations (\ref{eq:lin_sys}) has a unique solution for the weighted link-vector $w$ and the infection rate $\delta_T$ only if the matrix $M$ has full column rank, i.e. $\textrm{rank}(M) = {L_\textup{\textrm{max}}}+1$. The bound (\ref{rank_M_upperBound}) implies that the unconstrained set of linear equations (\ref{eq:lin_sys}) does not have a unique solution if the rank of the viral state matrix $V$ is smaller than $L_\textrm{max}/N = (N-1)/2$.

To solve a set of linear equations in practice (i.e., with finite precision arithmetic), the condition number of the respective matrix is decisive \cite{golub2012matrix}. With respect to the $\ell_2$-norm, the condition number $\kappa(V)$ of the viral state matrix $V$ equals $\kappa(V) = \sigma_1(V)/\sigma_\textrm{min}(V)$, where $\sigma_1$ and $\sigma_\textrm{min}$ denote the largest and smallest singular value of the matrix $V$, respectively. Closely related to the condition number $\kappa(V)$ is the \textit{numerical} rank $r$ of the viral state matrix $V$, which is the number of singular values of $V$ that are greater than a small threshold $\epsilon_\textrm{rank}$. By setting the threshold to $\epsilon_\textrm{rank} = \gamma \sigma_1(V)$ for some proportionality constant $\gamma >0$, the ratio of the largest to the smallest non-zero singular value of the truncated singular value decomposition $\tilde{V}_r$ of the matrix $V$ equals at most $1/\gamma$.
 
The average numerical rank of the viral state matrix $V$ is obntained by numerical simulations. We generate 100 instances of a Barab\'asi-Albert random graph \cite{barabasi1999emergence} with $N = 10$ to $N = 200$ nodes, where the initial number of nodes is set to $m_0 = 10$ and the number of links per addition of a new node is set to $m=3$. We set the effective infection rate to $\beta/\delta= 1.1 \tau^{(1)}_c$. For each network, the initial viral states $v_i[0]$ were generated at random, independently and uniformly in the interval $[0, 0.01 v_{\infty, i}]$. The observation length $n$ was set such that the viral state $v[k]$ (practically) converged to the steady-state $v_\infty$. Figure \ref{fig_rank_vs_N} shows that the numerical rank of the viral state matrix $V$ hardly changes with respect to the number of nodes $N$ when $N$ is greater than twenty. Even worse, the singular values $\sigma_1, \sigma_2, ...$ of the viral state matrix $V$ for networks of fixed size $N=100$ decrease exponentially fast, as illustrated by Figure \ref{fig_svd_plot}. Hence, the linear system (\ref{eq:lin_sys}) is very ill-conditioned, and it is almost impossible in practice to obtain the adjacency matrix $A$ and the spreading parameters $\beta_T, \delta_T$ only from solving the linear system (\ref{eq:lin_sys}).

\begin{figure}[h!]
	 \includegraphics[width=\textwidth]{./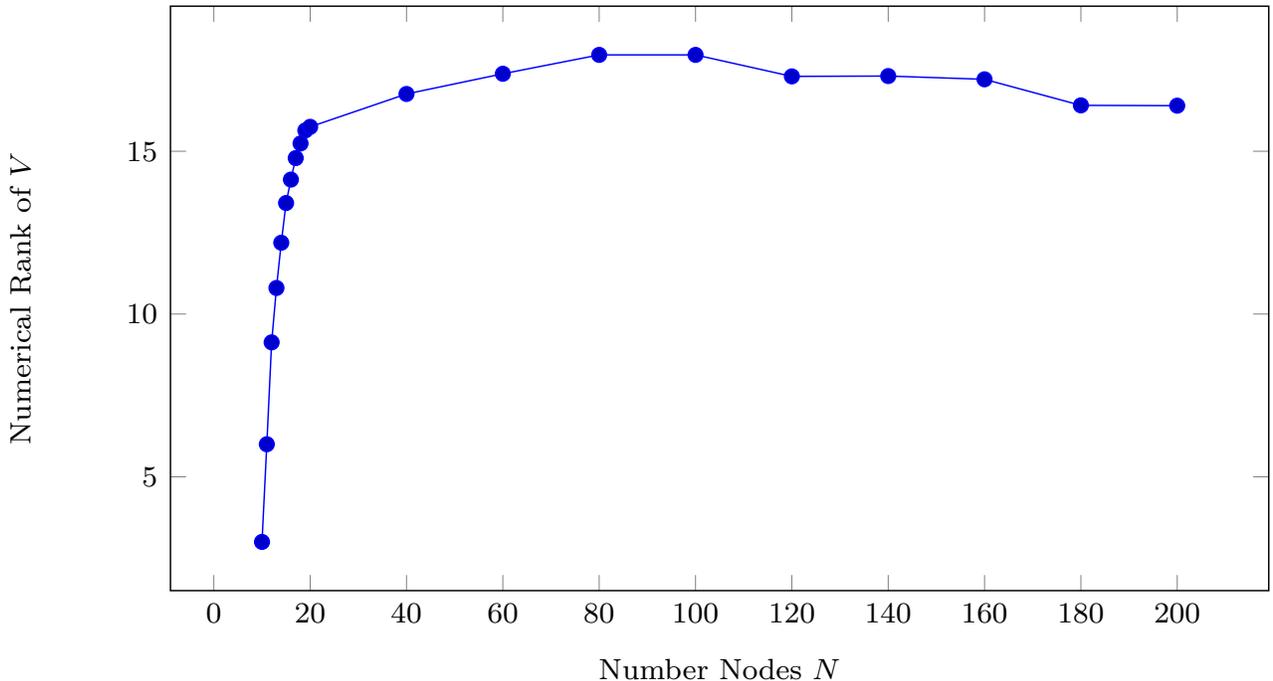}
	{\footnotesize\caption{The average numerical rank, implemented by the Matlab command \texttt{rank}, of the viral state sequence matrix $V$ versus the number of nodes $N$ for Barab\'asi-Albert random graphs. The results are averaged over 100 networks which were generated by the Barab\'asi-Albert random graph model.\label{fig_rank_vs_N}}}
\end{figure}

\begin{figure}[h!]
	 \includegraphics[width=\textwidth]{./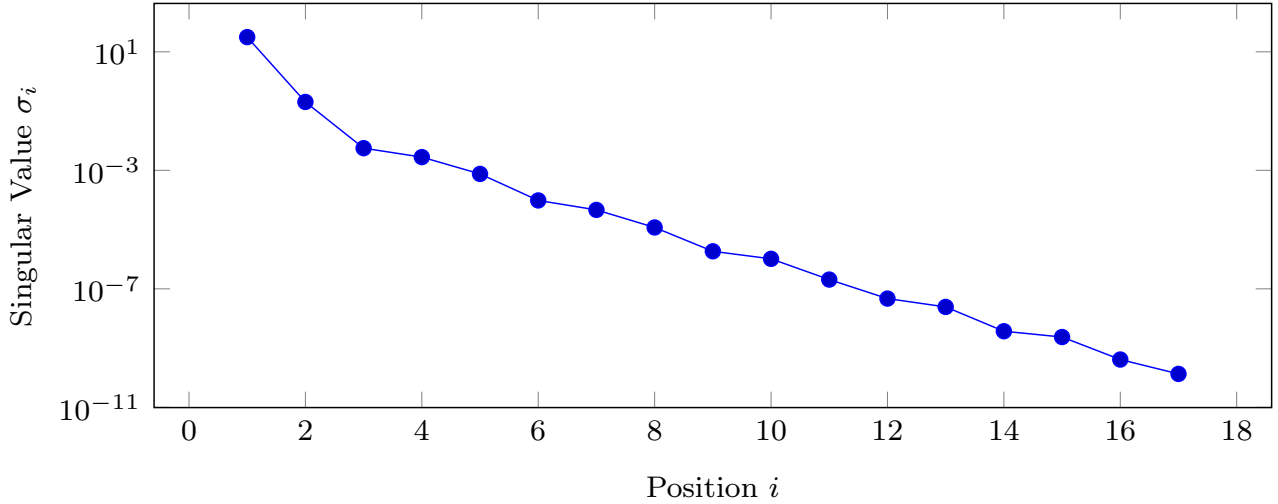}
	{\footnotesize\caption{The singular values $\sigma_i$ of the viral state matrix $V$ in descending order. Only the singular values above the threshold set by the Matlab command \texttt{rank} are considered. The results are averaged over 100 networks with $N=100$ nodes which were generated by the Barab\'asi-Albert random graph model. \label{fig_svd_plot}}}
\end{figure}

\subsection{Reduced-Size Linear System}
\label{subsec:red_size}
Since the observation length $n$ is considerably greater than the numerical rank of the $n \times N$ viral state matrix $V$, we resort to the truncated singular value decomposition (TSVD) to store and process the matrix $V$ in an efficient manner. We denote the numerical rank of the viral state matrix $V$ by $r$, and we replace the matrix $V$ in (\ref{eq:matrix_M}) by its TSVD $\tilde{V}_r = U_r S_r Q^T_r$, which is obtained by considering only the largest $r$ singular values and setting the other singular values to zero.

Before stating the main result of this section, we introduce two $r \times N$ matrices $\tilde{B}$ and $\tilde{C}$ whose columns $\tilde{B}_l, \tilde{C}_l \in \mathbb{R}^r$ are given by
\begin{align*}
\tilde{B}_l = \left(U_r S_r \right)^\dagger B_l 
\end{align*}
and
\begin{align*}
\tilde{C}_l = \left(U_r S_r \right)^\dagger C_l
\end{align*}
for every $l=1,..., N$, where $\left(U_r S_r \right)^\dagger$ denotes the Moore-Penrose pseudoinverse of the matrix $U_r S_r$.
\begin{lemma}[Reduced-Size Linear System] \label{lemma:reduced_size_lin_sys}
When the $n \times N$ viral state matrix $V$ is replaced by its TSVD $\tilde{V}_r$ with $r$ positive singular values, the linear system (\ref{eq:lin_sys}) can be approximated by
\begin{align} \label{eq:setlineqfinal_reduced}
\tilde{B}_{\textup{\textrm{vec}}}  &=   \tilde{M} \begin{pmatrix} 
w \\
\delta_T
\end{pmatrix}
\end{align}
where the $rN \times (L_\textup{\textrm{max}}+1)$ matrix $\tilde{M}$ is given by
\begin{align*}
 \tilde{M} =\begin{pmatrix}
I_N \otimes Q^T_r & \tilde{C}_\textup{\textrm{vec}} 
\end{pmatrix}
\begin{pmatrix} 
\Pi_N & 0 \\
0 & 1
\end{pmatrix}
\end{align*}
\end{lemma}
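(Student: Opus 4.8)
The plan is to start from the exact linear system of Lemma~\ref{lemma:reconstruction_as_linear_system}, substitute the TSVD $\tilde V_r = U_r S_r Q_r^T$ for the viral state matrix $V$ inside the expression~(\ref{eq:matrix_M}) for $M$, and then left-multiply the resulting (approximate) system by a single block-diagonal matrix built from the pseudoinverse $(U_r S_r)^\dagger$. The substitution $V \mapsto \tilde V_r$ is the only step that introduces an approximation; everything afterwards is an exact algebraic identity. Expanding the block structure of $M$, the substituted right-hand side of~(\ref{eq:lin_sys}) reads
\begin{align*}
\left(I_N \otimes \tilde V_r\right)\Pi_N w + C_{\textrm{vec}}\,\delta_T,
\end{align*}
so the task reduces to showing that projecting this expression through $I_N \otimes (U_r S_r)^\dagger$ reproduces $\tilde M\,(w^T,\delta_T)^T$.

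The key computation I would carry out uses two standard Kronecker-product facts. First, the mixed-product property $(A\otimes B)(C\otimes D) = (AC)\otimes(BD)$ applied to $\tilde V_r = (U_r S_r)(Q_r^T)$ gives the factorisation
\begin{align*}
I_N \otimes \tilde V_r = \left(I_N \otimes U_r S_r\right)\left(I_N \otimes Q_r^T\right).
\end{align*}
Second, since $U_r$ has orthonormal columns and $S_r$ is diagonal with the $r$ positive singular values on its diagonal, the $n\times r$ matrix $U_r S_r$ has full column rank $r$, so $(U_r S_r)^\dagger (U_r S_r) = I_r$. Left-multiplying the substituted system by the block-diagonal matrix $I_N \otimes (U_r S_r)^\dagger$ and invoking the mixed-product property once more yields $\left(I_N\otimes (U_rS_r)^\dagger\right)\left(I_N\otimes U_rS_r\right) = I_N \otimes I_r = I_{rN}$, which collapses the first term to exactly $\left(I_N \otimes Q_r^T\right)\Pi_N w$, the $w$-contribution of $\tilde M\,(w^T,\delta_T)^T$.

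It then remains to identify the two remaining pieces with the stated quantities. Because $I_N \otimes (U_r S_r)^\dagger$ is block diagonal with $N$ copies of $(U_r S_r)^\dagger$, applying it to the stacked vector $B_{\textrm{vec}} = (B_1^T, \dots, B_N^T)^T$ acts columnwise and produces exactly $\tilde B_{\textrm{vec}}$ with blocks $\tilde B_l = (U_r S_r)^\dagger B_l$; the identical argument turns $C_{\textrm{vec}}\,\delta_T$ into $\tilde C_{\textrm{vec}}\,\delta_T$. Collecting the $w$- and $\delta_T$-terms back into the block matrix gives~(\ref{eq:setlineqfinal_reduced}) with $\tilde M$ as claimed. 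The one point that deserves care — and the place I would be most explicit in the writeup — is the precise meaning of ``approximated by'': the reduction step left-multiplies by a matrix with nontrivial kernel, hence is not invertible, so any $(w,\delta_T)$ solving the substituted full system also solves~(\ref{eq:setlineqfinal_reduced}), while the converse holds only after projecting $B_{\textrm{vec}}$ and $C_{\textrm{vec}}$ onto the range of $I_N \otimes U_r$ via $U_r S_r(U_r S_r)^\dagger = U_r U_r^T$. The approximation is therefore entirely attributable to the truncation $V \mapsto \tilde V_r$, and no additional error is incurred by passing to the reduced-size system.
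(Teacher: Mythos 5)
Your proof is correct and takes essentially the same route as the paper's: both replace $V$ by its truncated SVD (equivalently, neglect the residual term $U_\epsilon S_\epsilon Q_\epsilon^T$ of the full SVD) and then reduce the system by applying $(U_r S_r)^\dagger$ blockwise, which is precisely the paper's ``project $B_l$, $C_l$ onto the image of $U_r$ and solve'' step written compactly as a single left-multiplication by $I_N \otimes (U_r S_r)^\dagger$. Your closing observation that the reduction is only one-directional (solutions of the truncated full system solve the reduced system, not conversely) is a point the paper leaves implicit, but the underlying argument is identical.
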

\begin{proof}
 Appendix \ref{appendix:reduced_size_lin_sys}.
\end{proof}
The reduced-size linear system (\ref{eq:setlineqfinal_reduced}) has $rN$ equations as compared to $nN$ equations in the full-size linear system (\ref{eq:lin_sys}). We emphasise that the matrix $\tilde{M}$ is sparse, and multiplying an $(L_\textrm{max}+1)\times 1$ vector by the matrix $\tilde{M}$ can be implemented efficiently.

Figure \ref{fig_rank_vs_N} indicates that the numerical rank of the viral state matrix $V$ does not increase with the number of nodes $N$. \textit{Hence, it may be necessary to observe multiple epidemic outbreaks to reconstruct the network accurately}. The linear system (\ref{eq:setlineqfinal_reduced}) can be extended straightforwardly for multiple epidemic outbreaks (or realisations): For the $l$-th epidemic outbreak, we denote the viral state at time $k$ by $v^{(l)}[k]$ and the resulting viral state matrix by $V^{(l)} = (v^{(l)}[0], ..., v^{(l)}[n_l-1])$, where $n_l$ is the number of observations. From each epidemic outbreak $l$ we obtain a linear system (\ref{eq:setlineqfinal_reduced}) with a matrix $\tilde{M}^{(l)}$ and a vector $\tilde{B}^{(l)}_\textrm{vec}$. We stack the equation (\ref{eq:setlineqfinal_reduced}) for every outbreak $l$ to finally obtain 
\begin{align} \label{eq:setlineqfinalfinal}
B_{\textrm{all}}  &=  M_{\textrm{all}} \begin{pmatrix} 
w \\
\delta_T
\end{pmatrix},
\end{align}
where the vector $B_{\textrm{all}}$ and the matrix $M_{\textrm{all}}$ are obtained by stacking the vectors $\tilde{B}^{(l)}_{\textrm{vec}}$ and the matrices $\tilde{M}^{(l)}$, respectively, for every epidemic outbreak $l$.

\subsection{Network Reconstruction Algorithm}
\label{subsec:reconstruction_algorithm}
Based on (\ref{eq:setlineqfinalfinal}), we formulate the network reconstruction as a \textit{basis pursuit} problem with non-negativity constraints:
\begin{align} \label{net_rec_basis_pursuit}
 \begin{aligned}&\underset{w, \delta_T}{\text{min}}  & & \left\lVert \begin{pmatrix} 
w \\
\delta_T
\end{pmatrix} \right\rVert_1 & \\
 &\text{s.t.} & & M_{\textrm{all}}  \begin{pmatrix} 
w \\
\delta_T
\end{pmatrix}  = B_{\textrm{all}} & \\ 
  & & & w_{l} \ge 0, \quad l = 1, ..., L_\textup{\textrm{max}} & \\
   & & &  \delta_T \ge 0& 
\end{aligned} 
 \end{align}

We pose the optimisation problem (\ref{net_rec_basis_pursuit}) only for the \textit{support recovery} of the weighted link-vector $w$, i.e. only for estimating which entries of the link-vector $w$ are non-zero. The solution to (\ref{net_rec_basis_pursuit}) for the weighted link-vector is denoted by $\hat{w}$ (the solution to (\ref{net_rec_basis_pursuit}) for the curing rate $\delta_T$ is discarded). The number of optimisation variables in (\ref{net_rec_basis_pursuit}) equals $L_\textup{\textrm{max}}+1$ and, hence, grows quadratically with the number of nodes $N$. To solve the optimisation problem (\ref{net_rec_basis_pursuit}) for large-scale networks, we adopt the alternating direction method of multipliers (ADMM) algorithm and make use of the Matlab implementation of Boyd \emph{et al.} \cite{boyd2011distributed}, which is available online. We apply the LSQR method \cite{paige1982lsqr} for the $x$-update in the ADMM algorithm, which is computationally beneficial since the matrix $M_\textrm{all}$ is sparse, and we rely on the Matlab command \texttt{lsqr}.

Numerically, a solution to (\ref{net_rec_basis_pursuit}) cannot be obtained with infinite precision, and even if the exact solution $\hat{w}_l$ for the $l$-th weighted link is zero, the respective numerical solution may be non-zero. We infer the presence of the link $l$ only if the estimate $\hat{w}_l$ exceeds a threshold $\epsilon$: 
\begin{align}\label{x_hat_bin}
\hat{x}_l(\epsilon) = \begin{cases}
1 \quad \text{if} \quad \hat{w}_l\ge \epsilon \\
0\quad \text{if} \quad \hat{w}_l< \epsilon
\end{cases}
\end{align}

It remains to set the value of the threshold $\epsilon$. As an example, Figure \ref{fig_link_rounding_plot} depicts a histogram of the numerical solutions for the weighted link estimates $\hat{w}_l$ for a network with $N=100$ nodes, for which the threshold $\epsilon$ can be set ``intuitively''. In general, we choose to set the threshold as follows:
\begin{align}\label{eps_lambda}
\epsilon^*= \textrm{ arg }\underset{\epsilon}{\textrm{min}} ~ \textrm{Var}\left[S_+(\epsilon)\right] + \textrm{Var}\left[S_-(\epsilon)\right],
\end{align}
where $S_+(\epsilon) = \left\{l \in \mathbb{N}_{L_\textup{\textrm{max}}}| \hat{w}_l \ge \epsilon \right\}$  denotes the set of weighted link estimates that are greater than a threshold $\epsilon$, and $\textrm{Var}\left[S_+(\epsilon)\right]$ denotes the population variance of the set $\{\hat{w}_l | l \in S_+(\epsilon)\}$. (The set $S_-(\epsilon)$ and the variance $\textrm{Var}\left[S_-(\epsilon)\right]$ are defined analogously, but with regard to weighted link estimates $\hat{w}_l$ that are \textit{smaller} than a threshold $\epsilon$.) Hence, the threshold $\epsilon^*$ in (\ref{eps_lambda}) minimises the variance of the weighted link estimates $\hat{w}_l$ that are left, and right, of the threshold $\epsilon^*$. In practice, the optimal threshold may not exactly equal to $\epsilon^*$, and we also consider values for the threshold $\epsilon$ that are close to $\epsilon^*$. More precisely, additionally to the threshold $\epsilon^*$, we consider 100 thresholds $\epsilon^-_1 < ... < \epsilon^-_{100} < \epsilon^*$ that are smaller than $\epsilon^*$ and 100 thresholds $\epsilon^+_1 > ... >\epsilon^+_{100}> \epsilon^*$ that are greater than $\epsilon^*$, where we choose the spacing of two thresholds $\epsilon^-_{i}$ and $\epsilon^-_{i+1}$ such that there is exactly one link $l$ whose estimate $\hat{w}_l$ satisfies $\epsilon^-_{i} < \hat{w}_l < \epsilon^-_{i+1}$ (and similarly for $\epsilon^+_{i}$ and $\epsilon^+_{i+1}$). We denote the set of all considered thresholds as $S_\epsilon = \{\epsilon^-_l| l\in \mathbb{N}_{100}\} \cup \{\epsilon^+_l| l\in \mathbb{N}_{100}\} \cup \{\epsilon^*\}$. As a result of the 201 different thresholds $\epsilon \in S_\epsilon$, we obtain 201 different estimates $\hat{x}(\epsilon)$ for the link vector $x$.

\begin{figure}[h!]
	 \includegraphics[width=\textwidth]{./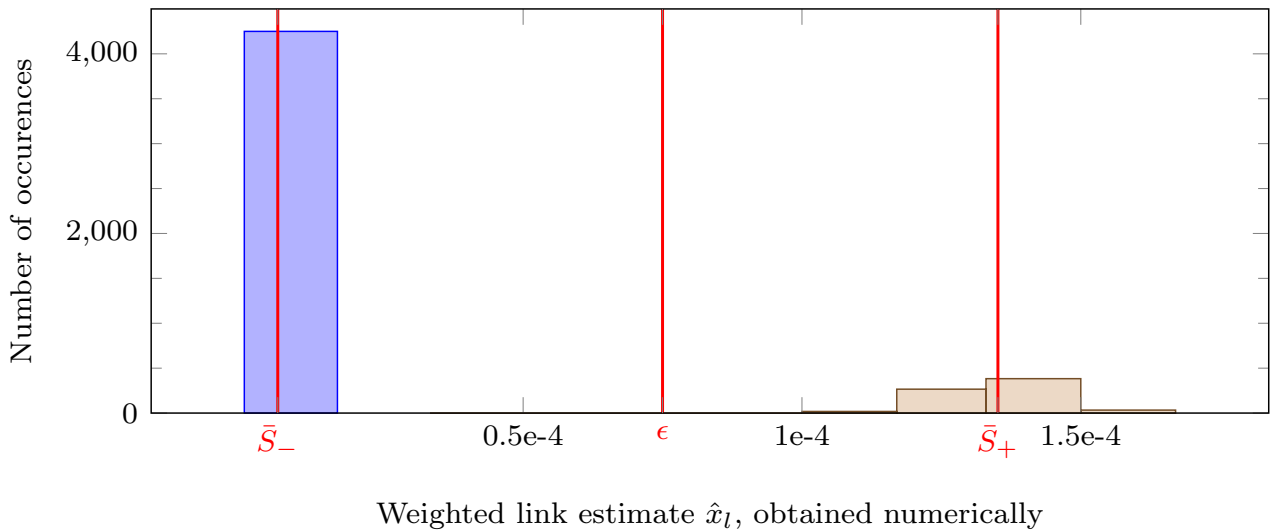}
	{\footnotesize\caption{ For a network with $N=100$ nodes, this is a histogram of the numerical solution $\hat{w}_l$ to (\ref{net_rec_basis_pursuit}) for every link $l= 1, ..., 4950$. Intuitively, one may set the threshold $\epsilon$ to $7.5 \cdot 10^{-5}$, which results in the blue bar left of the threshold $\epsilon$ and beige bars right the threshold $\epsilon$. The respective mean of the values left and right of the threshold $\epsilon$ are denoted by $\bar{S}_-$ and $\bar{S}_+$.
	 \label{fig_link_rounding_plot}}}
\end{figure}

After the support recovery by (\ref{net_rec_basis_pursuit}) and (\ref{x_hat_bin}), we aim for a refinement of the estimates for the spreading parameters $\delta_T$ and $\beta_T$. For every threshold $\epsilon \in S_\epsilon$, we pose the refinement step as a constrained linear least-squares problem with respect to the spreading parameters $\beta_T$ and $\delta_T$ only:
 \begin{align} \label{opt_prob_beta_delta}
 \begin{aligned}
(\hat{\beta}_T(\epsilon), \hat{\delta}_T(\epsilon)) = &\text{ arg }\underset{\beta_T, \delta_T}{\text{min}}  & & \left\lVert B_{\textrm{all}} - M_{\textrm{all}}  \begin{pmatrix} 
\hat{x}(\epsilon) \beta_T \\
\delta_T
\end{pmatrix} \right\rVert^2_2& \\
 &\text{ s.t.} & &  \beta_T  \ge 0& \\
   & & &  \delta_T \ge 0& 
\end{aligned} 
 \end{align} 
  The heuristic estimation procedures (\ref{x_hat_bin}) and (\ref{opt_prob_beta_delta}) are executed for each of the 201 thresholds $\epsilon$ in the set $S_\epsilon$.  Thus, we obtain 201 different estimates for the spreading parameters $\beta_T, \delta_T$ and the link-vector $x$. Every threshold $\epsilon \in S_\epsilon$ results in a mean squared error $\textrm{MSE}(\epsilon)$, which equals the optimal value of (\ref{opt_prob_beta_delta}). We denote the threshold in $S_\epsilon$ which resulted in the minimal error $\textrm{MSE}(\epsilon)$ as $\epsilon_\textrm{opt}$. The complete estimation procedure is given in pseudo-code by Algorithm \ref{ALG_CLASSO}.

\begin{algorithm}
\caption{Heuristic NIMFA Network Reconstruction}
\begin{algorithmic}[1]
\State \textbf{Input: } viral state matrix $V$
\State \textbf{Output: } estimates for the adjacency matrix $\hat{A}_\textrm{vec}$ and spreading parameters $\hat{\beta}_T, \hat{\delta}_T$
\State $\hat{w} \gets$ solution to (\ref{net_rec_basis_pursuit})
\State Determine $\epsilon^*$ and $S_\epsilon$ by (\ref{eps_lambda})
\For {$\epsilon \in S_\epsilon$}
\State $\hat{x}(\epsilon) \gets$ threshold elements of $\hat{w}$ by (\ref{x_hat_bin})
\State $(\hat{\beta}_T(\epsilon), \hat{\delta}_T(\epsilon)) \gets$ solution to (\ref{opt_prob_beta_delta})
\State $\textrm{MSE}(\epsilon) \gets$ minimum of (\ref{opt_prob_beta_delta})
\EndFor
\State $\epsilon_\textrm{opt} \gets$ minimiser of $\textrm{MSE}(\epsilon)$ for $\epsilon \in S_\epsilon$
\State $\hat{A}_\textrm{vec} \gets \Pi_N \hat{x} (\epsilon_\textrm{opt})$
\State $\hat{\beta}_T \gets \hat{\beta}_T(\epsilon_\textrm{opt})$
\State $\hat{\delta}_T \gets \hat{\delta}_T(\epsilon_\textrm{opt})$
\end{algorithmic}
\label{ALG_CLASSO}
\end{algorithm}

\section{Numerical Evaluation}
\label{sec:numerical_evaluation}
In the following, we study the performance of Algorithm \ref{ALG_CLASSO} numerically. The curing rate is set to $\delta=1$, and the effective infection rate is set to $\beta/\delta= 1.1 \tau^{(1)}_c$, where $\tau^{(1)}_c$ is the epidemic threshold. The inequality (\ref{eq:bound_sampling_time}) on the sampling time $T$ depends on the degree $d_i$ of every node $i$, but we would like to set the sampling time $T$ only in dependency of the network size $N$. The denominator $\delta + \beta d_i$ in (\ref{eq:bound_sampling_time}) is upper bounded by $\delta ( 1 + 1.1 \tau^{(1)}_c (N-1))$ since the degree $d_i$ of every node $i$ is bounded by $d_i\le N-1$ and the infection rate is set to $\beta= 1.1\delta \tau^{(1)}_c$. Furthermore, the epidemic threshold equals $\tau^{(1)}_c = 1/\lambda_1$, from which follows that $\tau^{(1)}_c (N-1) \le 2/N$ since $\lambda_1 \ge 2L/N$ and $L\ge N-1$ for a connected graph \cite{van2010graph}. Finally, by setting the sampling time to 
\begin{align*}
T = \frac{1}{\delta(1 +1.1 N/2)}
\end{align*}
we ensure that the inequality (\ref{eq:bound_sampling_time}) holds for every network of size $N$.

The initial viral states $v_i[0]$ were generated at random, independently and uniformly in the interval $[0, 0.01 v_{\infty, i}]$, and thus the condition (\ref{eq_below_steady_state}) holds. The observation length $n$ was set such that the viral state $v[k]$ (practically) converged to the steady-state $v_\infty$. The considered networks were generated by two different random graph models. First, we consider the Watts-Strogatz random graph model \cite{watts1998collective} with rewiring probability $p_\textrm{WS} = 0.2$ and a varying mean node degree $d_\textrm{WS}$ as specified further below. Second, we consider the Barab\'asi-Albert random graph model \cite{barabasi1999emergence} with sparsity parameters $m_0 = 10$ and $m$ as specified further below.

We define the estimation error of the adjacency matrix as the sum of erroneous links, normalised by the total number of links in the true network:
\begin{align*}
\varepsilon_A = \frac{\sum^N_{i = 1}\sum^N_{j = 1} |a_{ij} - \hat{a}_{ij}|}{\sum^N_{i = 1}\sum^N_{j = 1} a_{ij}} 
\end{align*}
Here, $\hat{a}_{ij}$ denote the elements of the estimated adjacency matrix $\hat{A}$. The error $\varepsilon_\beta$ of the estimate $\hat{\beta}$ of the infection rate is defined as the relative deviation $|\beta-\hat{\beta}|/\beta$, and the error $\varepsilon_\delta$ of the estimate of the curing rate $\hat{\delta}$ is defined analogously.

Given the viral state sequence $v[0], ..., v[k]$ at time $k$, Algorithm \ref{ALG_CLASSO} returns estimates $\hat{\beta}, \hat{\delta}, \hat{A}$ for the spreading parameter and the adjacency matrix. By replacing the true adjacency matrix $A$ and the true spreading parameters $\beta, \delta$ in the NIMFA equations (\ref{NIMFA_disc_stacked}) by their respective estimates and setting the initial state at time $k$ to $v[k]$, we obtain a prediction $\hat{v}[k+1], ..., \hat{v}[n]$ of the viral state. We define the error of the viral state prediction as the averaged relative deviation of the viral states:
\begin{align*}
\varepsilon_v[k] = \frac{1}{n - k} \sum^{n}_{l = k+1} \frac{1}{N} \sum^{N}_{i=1} \frac{\left| v_i[l] - \hat{v}_i[l]\right|}{\left| v_i[l]\right|}
\end{align*}

\subsection{Impact of Sparsity on the Reconstruction Accuracy}
\label{subsec:sparsity} 
We would like to know when the network can be reconstructed accurately from only one epidemic outbreak. Motivated by the theoretical results in Section \ref{sec:lasso}, we focus on studying the impact of the network sparsity on the estimation performance of Algorithm \ref{ALG_CLASSO}. Figure \ref{fig_sparsity_WS} and Figure \ref{fig_sparsity_BA} illustrate that the network sparsity, rather than the number of nodes $N$, is decisive for an accurate reconstruction of the network from a single epidemic outbreak. Furthermore, the curing rate $\delta_T$ can be estimated accurately even if the network reconstruction error $\varepsilon_A$ is great.

\begin{figure}[h!]
	 \includegraphics[width=\textwidth]{./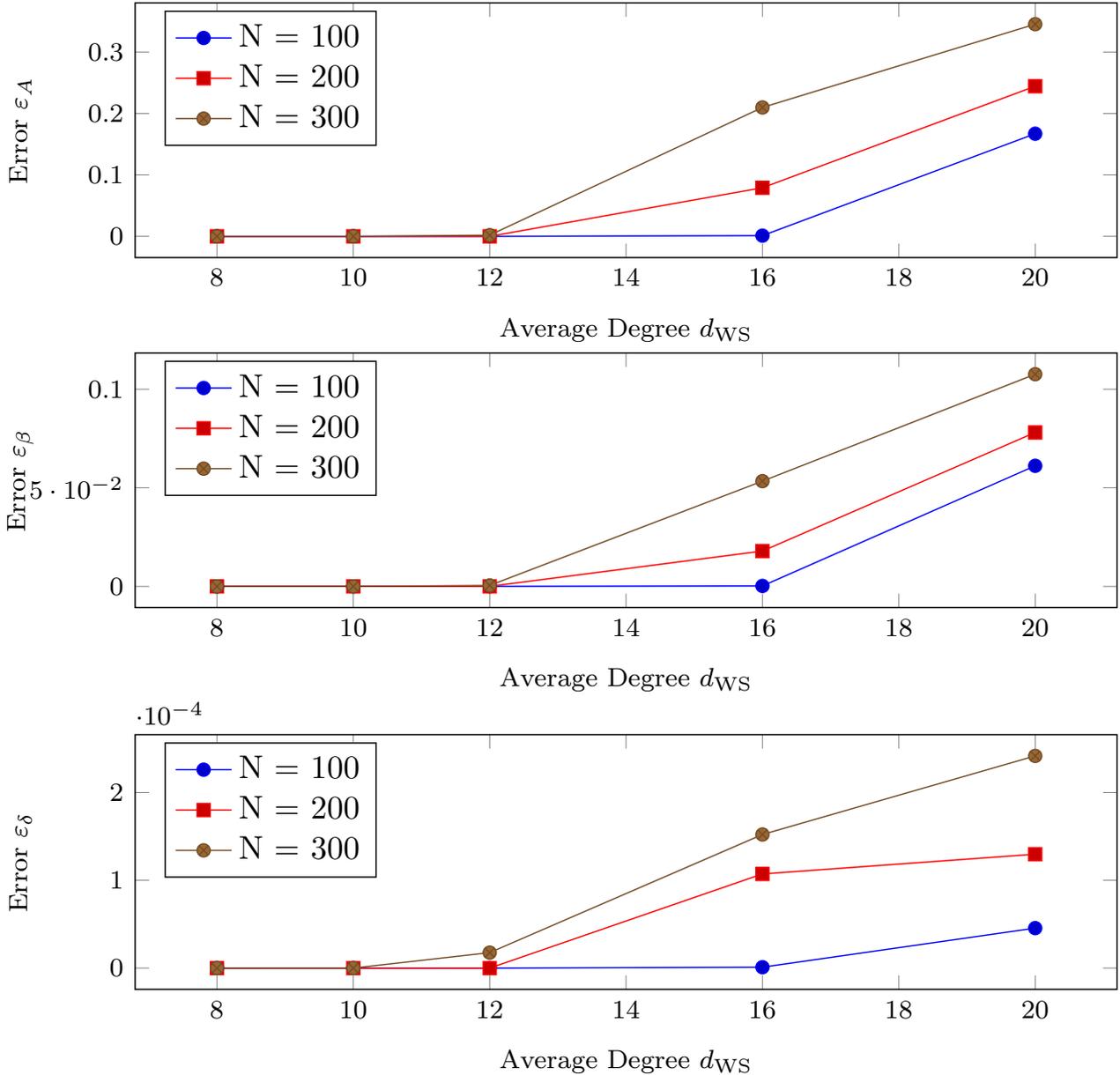}
	{\footnotesize\caption{ The estimation accuracy of Algorithm \ref{ALG_CLASSO} versus the average degree $d_\textrm{WS}$, averaged over 50 networks which were generated by the Watts-Strogatz random graph model. \label{fig_sparsity_WS}}}
\end{figure}

\begin{figure}[h!]
	 \includegraphics[width=\textwidth]{./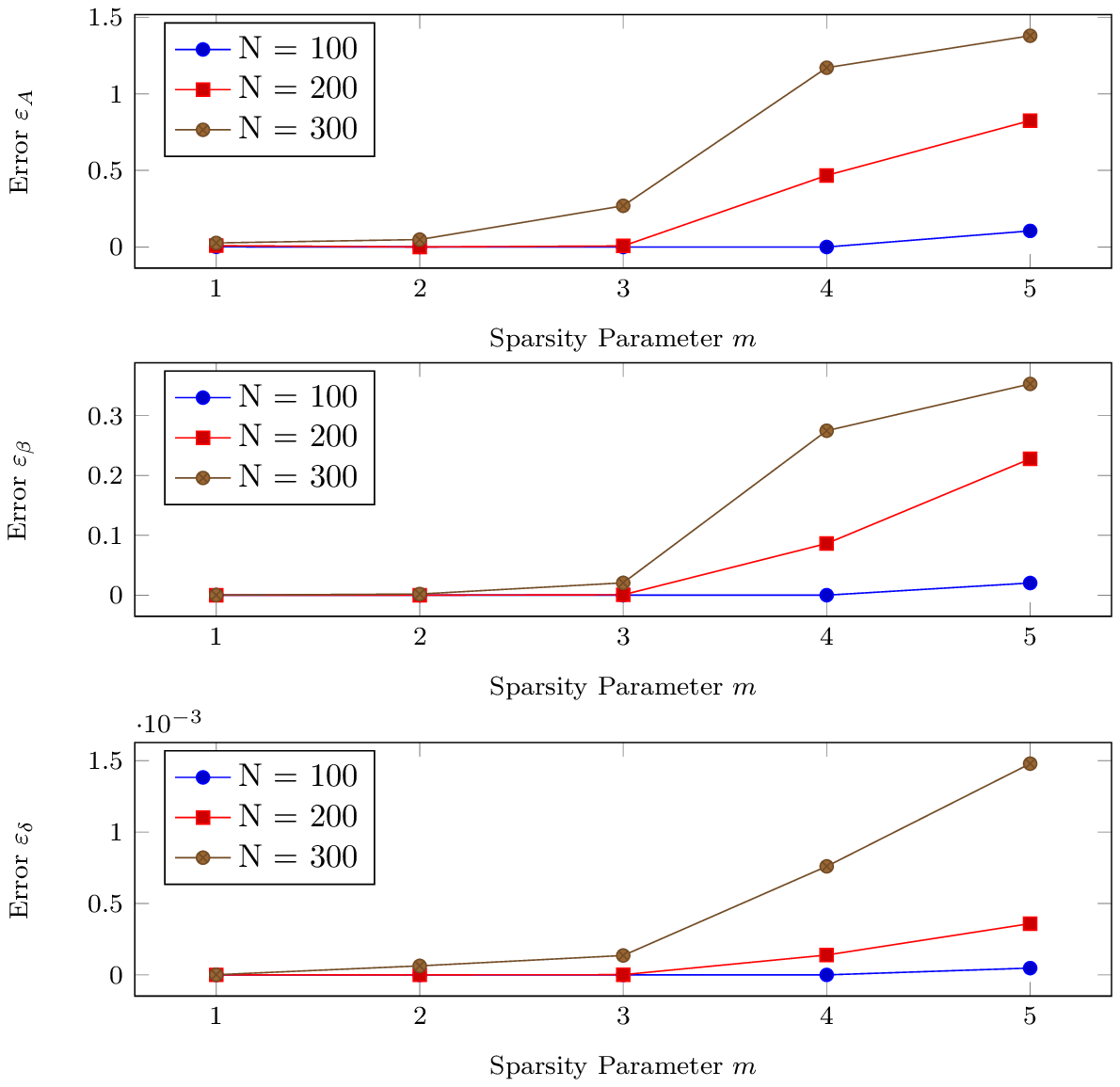}
	{\footnotesize\caption{ The estimation accuracy of Algorithm \ref{ALG_CLASSO} versus the sparsity parameter $m$, averaged over 50 networks which were generated by the Barab\'asi-Albert random graph model. \label{fig_sparsity_BA}}}
\end{figure}

If the network is not sparse, the basis pursuit method does not yield an accurate estimate of the network, and it is necessary to observe more than one epidemic outbreak to reconstruct the network accurately. Figure \ref{fig_dense_WS} shows that the accuracy of the network reconstruction increases quickly when more epidemic outbreaks are observed.

\begin{figure}[h!]
	 \includegraphics[width=\textwidth]{./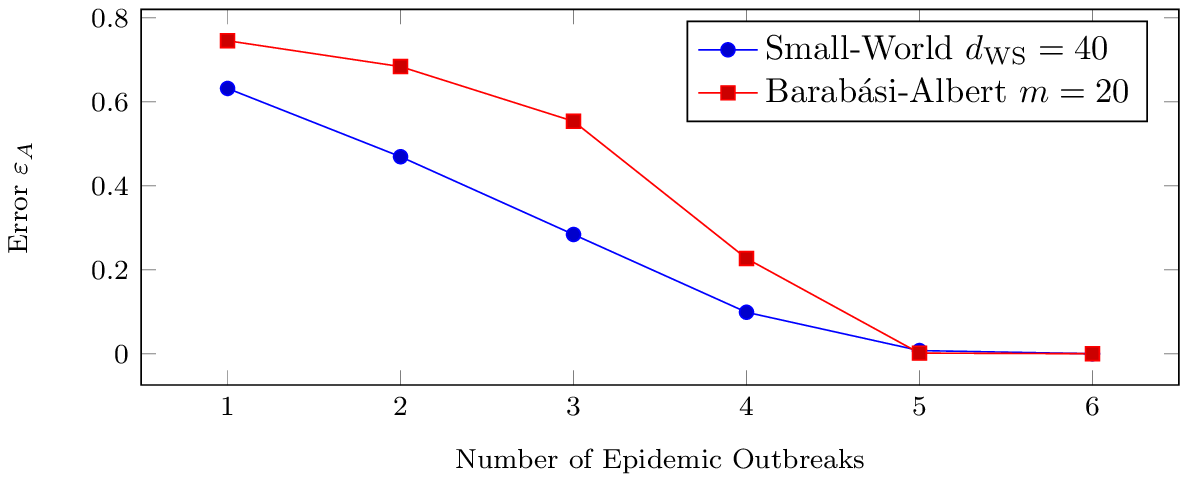}
	{\footnotesize\caption{ The estimation accuracy of Algorithm \ref{ALG_CLASSO} versus the number of epidemic outbreaks, averaged over 100 networks which were generated by the Watts-Strogatz and Barab\'asi-Albert random graph model with the parameters set to $d_\textrm{WS} = 40$ and $m= 20$, $m_0= 20$, respectively. \label{fig_dense_WS}}}
\end{figure}

\subsection{Predicting Epidemic Outbreaks}
Of particular practical interest is the prediction of an epidemic outbreak \textit{before} the steady state $v_\infty$ is reached: On the one hand, the prediction of the spread of an infectious disease allows for sophisticated countermeasures, such as the vaccination of groups with the greatest risk of infection. On the other hand, forecasting the evolution of a trend, the choice of a consumer product or a tweet among individuals enables targeted strategies aiming for \textit{increasing} the prevalence of the spread.

As discussed in Subsection \ref{subsec:sparsity}, sufficiently sparse networks can be reconstructed accurately from only one epidemic outbreak. Furthermore, it is even possible to reconstruct sparse networks by observing only the initial stage of the epidemic outbreak: We generate 100 networks of size $N=100$ by the Watts-Strogatz and the Barab\'asi-Albert random graph models with an average degree of $d_\textrm{WS} = 8$ and sparsity parameter $m = 1$, respectively. For each of those randomly generated networks, we run the network reconstruction method of Algorithm \ref{ALG_CLASSO} based on a viral state trace $v[0], ..., v[n-1]$ of observation length $n$ for increasing values of $n$. 

We define $n_\textrm{max}$ as the number of viral state observations such that the viral state can be predicting with a 95\% accuracy, more precisely: $\varepsilon_v[n] \le 0.05$ for all $n \ge n_\textrm{max}$ and \textit{for all} of the 100 randomly generated networks. For sparse networks, the epidemic outbreak can be predicted accurately, well before the steady state $v_\infty$ is reached as illustrated by Figure \ref{fig_prediction_mBA1_kWS4}. 

\begin{figure}[h!]
	 \includegraphics[width=\textwidth]{./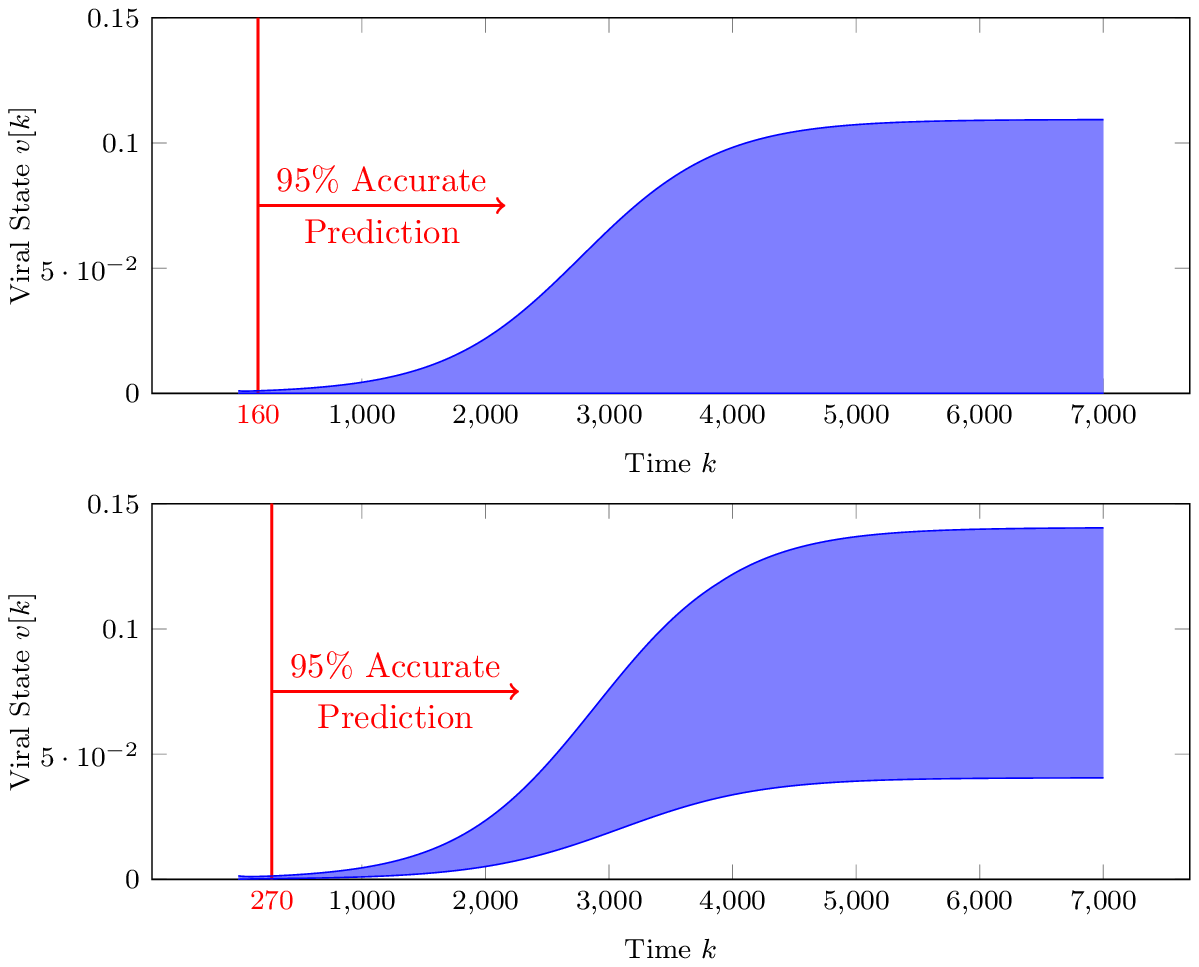}
	{\footnotesize\caption{The upper plot refers to Barab\'asi-Albert random networks with sparsity parameter $m=1$, and the lower plot refers to Watts-Strogatz random graphs with an average degree of $d_\textrm{WS} = 8$. Both plots refer to networks of size $N=100$. The viral state trace $v_i[0], ..., v_i[7000]$ for every node $i$ and for each of the 100 randomly generated networks falls in the respective blue envelope. The red line denotes the maximum number of observations $n_\textrm{max}$ which are required for an 95\% accurate viral state prediction ($\varepsilon_v[n_\textrm{max}] \le 0.05$) for \textit{each} of the 100 networks. \label{fig_prediction_mBA1_kWS4}}}
\end{figure}

\subsection{Computation Time}
Given a single epidemic outbreak, the computation time $T_\textrm{comp}$ of Algorithm \ref{ALG_CLASSO} versus the network size $N$ for Watts-Strogatz and Barab\'asi-Albert networks is depicted on a log-log scale in Figure \ref{fig_copm_time_WS} and Figure \ref{fig_copm_time_BA}, respectively. For Watts-Strogatz networks, the curves in Figure \ref{fig_copm_time_WS} indicate that the computation time $T_\textrm{comp}$ grows polynomially with respect to the number of nodes $N$: By fitting the function $\log_{10}(T_\textrm{comp}) = \theta_1\log_{10}(N) +  \theta_2$ to the data points in Figure \ref{fig_copm_time_WS}, we obtain the values for the parameters $\theta_1, \theta_2$ given in Table \ref{table:parameters_fit}. Hence, the computation time $T_\textrm{comp}$ grows with $\mathcal{O}(N^{\theta_1})$, where $\theta_1 \approx 3.5$. The average degree $d_\textrm{WS}$ (or, the sparsity) of the network does not have a significant impact on the computation time $T_\textrm{comp}$. For Barab\'asi-Albert networks, Figure \ref{fig_copm_time_BA} indicates that the computation time $T_\textrm{comp}$ does not follow as clearly as for Watts-Strogatz networks. However, the slope of the curve in Figure \ref{fig_copm_time_BA} saturates for $N\ge 200$ at a value of approximately $\theta \approx 1.7$. Thus, the computation time $T_\textrm{comp}$ with respect to the number of nodes $N$ for Barab\'asi-Albert networks can be upper bounded by a polynomial. A computation time that is polynomial with the network size $N$ renders the network reconstruction approach of Algorithm \ref{ALG_CLASSO} feasible in practice.

\begin{center}
  \begin{table}
  \centering
  \begin{tabular}{ | l | l | l | l |}\hline
    $d_\textrm{WS}$ & 8 & 10 & 12 \\ \hline \hline 
     $\theta_1$ & 3.3996  & 3.6343 & 3.4479 \\ \hline 
    $\theta_2$ & -11.5286 & -12.2139  & -10.5584 \\ \hline 
   \end{tabular}  
   \caption{Parameters obtained by fitting polynomial functions to the the computation time $T_\textrm{comp}$ in dependency of the number of nodes $N$. \label{table:parameters_fit}}   
   \end{table}
\end{center}

\begin{figure}[h!]
	 \includegraphics[width=\textwidth]{./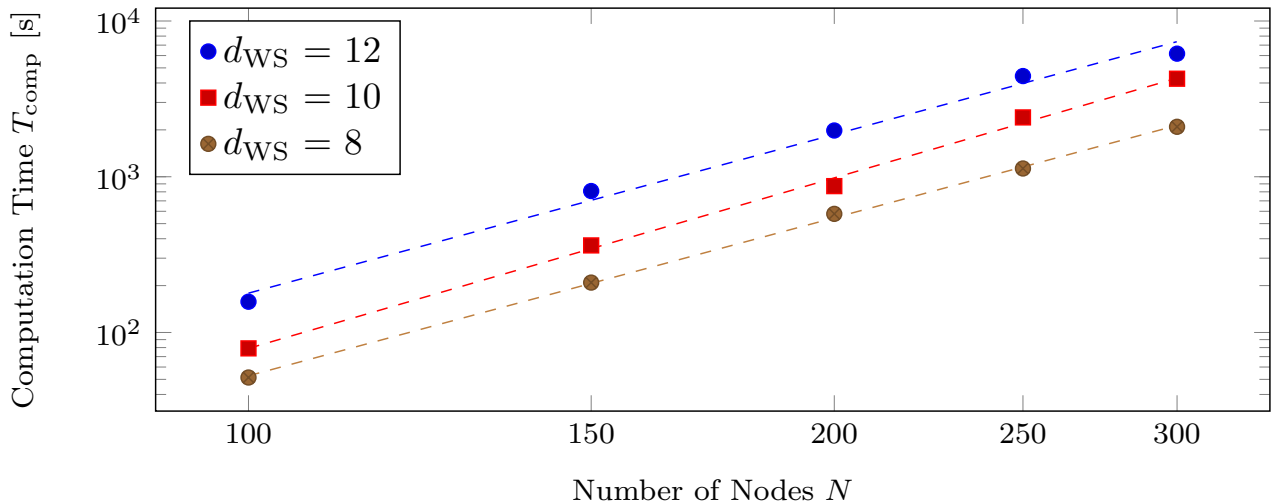}
	{\footnotesize\caption{ The computation time $T_\textrm{comp}$ of Algorithm \ref{ALG_CLASSO} versus the number of nodes $N$ on a log-log scale, averaged over 50 networks which were generated by the Watts-Strogatz random graph model with average degree $d_\textrm{WS} = 8, 10, 12$. The fitted linear functions are given by the dashed curves. \label{fig_copm_time_WS}}}
\end{figure}

\begin{figure}[h!]
	 \includegraphics[width=\textwidth]{./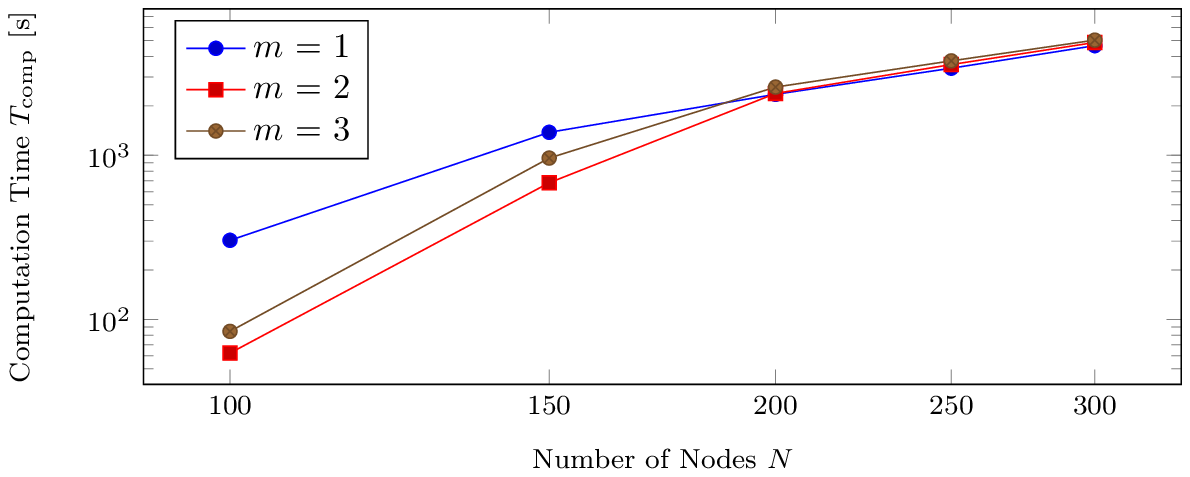}
	{\footnotesize\caption{The computation time $T_\textrm{comp}$ of Algorithm \ref{ALG_CLASSO} versus the number of nodes $N$ on a log-log scale, averaged over 50 networks which were generated by the Barab\'asi-Albert random graph model with sparsity parameter $m = 1, 2, 3$. \label{fig_copm_time_BA}}}
\end{figure}

\section{Conclusions}
\label{sec:conclusion}
This work considers the prediction of the viral dynamics between groups of individuals which is described by the discrete-time NIMFA model. Our contribution is twofold. 

First, we give an alternative and equivalent description of the NIMFA equations. We show that, if the initial viral states are close to zero, then the viral states of every group are upper bounded by the respective steady-state. Hence, the absence of overshooting the steady-state, which is a reasonable assumption for many real spreading phenomena, is captured by the NIMFA model.

Second, we introduced a network reconstruction and spreading parameter estimation algorithm by formulating the estimation problem in a basis pursuit sense. By numerical simulations, we show that the algorithm has a polynomial runtime. Sparse networks can be reconstructed accurately from only one epidemic outbreak, and dense network can be inferred by observing multiple outbreaks, as demonstrated by numerical evaluations. A powerful application of the network reconstruction method is the prediction of the viral state evolution of every node, given only a few observations on an unknown sparse network.

For very large-scale networks, developing faster, and in particular distributed, versions the proposed network reconstruction algorithm stands on the agenda of future research. Furthermore, our ambition is to develop (possibly approximate) methods for predicting the viral state evolution for an arbitrarily dense contact network.

\bibliographystyle{ieeetr}     

\appendix

\section{Proof of Proposition \ref{proposition:delta_v_equation}}
\label{appendix:difference_eq}
Since $\Delta v_i[k+1]= v_i[k+1] -v_{\infty, i}$, the evolution of the difference $\Delta v_i[k]$ over time $k$ can be stated with the NIMFA equations (\ref{NIMFA_disc_}) as
\begin{align}\label{delta_vi_ss}
\Delta v_i[k+1] & = (1 - \delta_T) v_i[k] + \beta_T \sum^N_{j=1} a_{ij}v_j[k] -\beta_T v_i[k] \sum^N_{j=1} a_{ij}v_j[k]  - v_{\infty, i}
\end{align}
We would like to express the difference $\Delta v_i[k+1]$ at the next time $k+1$ only in dependency of the difference $\Delta v[k]$ at the current time $k$ and the constant steady state $v_\infty$. The steady state $v_\infty$ is an equilibrium point of (\ref{NIMFA_disc_stacked}) and satisfies
\begin{align} \label{steadyState_f}
v_{\infty ,i} = (1 - \delta_T)v_{\infty, i} + \beta_T (1 - v_{\infty,i}) \sum^N_{j=1} a_{ij} v_{\infty, j},
\end{align}
for all nodes $i$. We insert (\ref{steadyState_f}) in (\ref{delta_vi_ss}) and obtain
\begin{align} \label{oijoijoijo}
\Delta v_i[k+1] & = (1 - \delta_T) v_i[k] + \beta_T \sum^N_{j=1} a_{ij}v_j[k]  - \beta_T v_i[k] \sum^N_{j=1} a_{ij}v_j[k]   \\
&\quad -(1 - \delta_T)v_{\infty, i} - \beta_T  \sum^N_{j=1} a_{ij} v_{\infty, j} +\beta_T v_{\infty, i}\sum^N_{j=1} a_{ij} v_{\infty, j} \nonumber
\end{align}
Since $\Delta v_i[k] = v_i[k]- v_{\infty,i}$, we can express (\ref{oijoijoijo}) more compactly as
\begin{align} \label{eqljknlnklsdgf}
\Delta v_i[k+1] & = (1 - \delta_T)\Delta v_i[k]+ \beta_T  \sum^N_{j=1} a_{i j} \Delta v_j[k]  -\beta_T   \sum^N_{j=1}a_{i j} \left( v_i[k]v_j[k] - v_{\infty, i} v_{\infty, j} \right)
\end{align}
The first two addends in (\ref{eqljknlnklsdgf}) are already in the desired form: they depend on the difference $\Delta v[k]$ but not on the viral state $v[k]$ at time $k$. To replace the viral state $v[k]$ in the last term of (\ref{eqljknlnklsdgf}) by an expression of the difference $\Delta v[k]$, we observe that
\begin{align} \label{sljknkljnd}
 v_i[k]v_j[k] - v_{\infty, i} v_{\infty, j}=  \Delta v_i[k] \Delta v_j[k] +  \Delta v_i[k] v_{\infty, j} + v_{\infty, i}  \Delta v_j[k], 
\end{align}
since $v_i[k] = \Delta v_i[k]+ v_{\infty,i}$. Inserting (\ref{sljknkljnd}) in (\ref{eqljknlnklsdgf}) yields 
\begin{align} \label{ljnljjjjsssafff}
\Delta v_i[k+1] & = \left( 1 - \delta_T  -\beta_T  \sum^N_{j=1}a_{i j}  v_{\infty, j} \right)\Delta v_i[k] + \beta_T (1- v_{\infty, i} )  \sum^N_{j=1} a_{i j} \Delta v_j[k]  -\beta_T   \Delta v_i[k] \sum^N_{j=1}a_{i j}  \Delta v_j[k]   
\end{align}
The expression (\ref{ljnljjjjsssafff}) can be further simplified. The steady-state equation (\ref{steadyState_f}) is equivalent to  
\begin{align} \label{stads}
\sum^N_{j=1} a_{ij} v_{\infty, j} = \frac{\delta_T}{\beta_T}\left( \frac{1}{1 - v_{\infty, i}} -1\right)
\end{align}
From (\ref{stads}) follows that (\ref{ljnljjjjsssafff}) is equivalent to 
\begin{align} \label{delta_vi_single}
\Delta v_i[k+1] & = \left( 1  + \frac{\delta_T}{ v_{\infty, i} - 1}\right) \Delta v_i[k]+ \beta_T (1- v_{\infty, i} )  \sum^N_{j=1} a_{i j} \Delta v_j[k]  -\beta_T   \Delta v_i[k] \sum^N_{j=1}a_{i j}  \Delta v_j[k]   
\end{align}
Stacking equation (\ref{delta_vi_single}) for all nodes $i = 1, ..., N$ completes the proof.

\section{Proof of Corollary \ref{corollary:below_steady_state}}
\label{appendix:below_steady_state}
We rewrite equation (\ref{ljnljjjjsssafff}) to obtain
\begin{align} \label{lkjnkja}
\Delta v_i[k+1] & = g_i[k] + h_i[k] \Delta v_i[k] 
\end{align}
where the terms $g_i[k]$ and $h_i[k]$ are given by 
\begin{align*}
g_i[k] = \beta_T (1- v_{\infty, i} )  \sum^N_{j=1} a_{i j} \Delta v_j[k]    
\end{align*}
and
\begin{align*}
h_i[k] = \left( 1 - \delta_T  -\beta_T  \sum^N_{j=1}a_{i j}  \left( v_{\infty, j} +\Delta v_j[k]\right) \right)
\end{align*}
for every node $i$. Since $a_{i j} \ge 0$, $\beta_T > 0$ and $(1- v_{\infty, i} ) \ge 0$, it holds that
\begin{align*}
\Delta v_j[k] \ge 0 ~ \forall j \Rightarrow g_i[k]\ge 0
\end{align*}
Furthermore, it holds $ v_{\infty, j} +\Delta v_j[k] =  v_j[k]$ and $ v_j[k] \in [0,1]$, which yields
\begin{align} \label{klkjnlljnklnnlkn}
h_i[k] = 1 - \delta_T  -\beta_T  \sum^N_{j=1}a_{i j}  v_j[k] \ge 1 - \delta_T -n\beta_T  d_i 
\end{align}
Thus, if $\delta_T  +\beta_T  d_i \le 1$, then (\ref{klkjnlljnklnnlkn}) implies that $h_i[k] \ge 0$. From $g_i[k] \ge 0$ and $h_i[k] \ge 0$ if $\Delta v_i[k] \ge 0$ for all nodes $i$ and (\ref{lkjnkja}) follows that: $\Delta v_i[k+1] \ge 0$ for all nodes $i$ if $\Delta v_i[k] \ge 0$ for all nodes $i$. Hence, we obtain by induction that $\Delta v_i[0] \ge 0$ for all nodes $i$ implies $\Delta v_i[k] \ge 0$ for all nodes $i$ at every time $k$, which proves Corollary \ref{corollary:below_steady_state}. (Analogously, we can prove that $\Delta v_i[0] \le 0$ for all nodes $i$ implies $\Delta v_i[k] \le 0$ for all nodes $i$ at every time $k$.)

\section{Proof of Lemma \ref{lemma:reconstruction_as_linear_system}}
\label{appendix:reconstruction_as_linear_system}
We reformulate the systems equations (\ref{NIMFA_disc_}) for node $i$ as
\begin{align*}
v_i [k + 1] - v_i[k]  & =  - \delta_T v_i[k] +  (1 - v_i[k]) v^T[k] \beta_T A_i
\end{align*}
Dividing by $(1 - v_i[k])$ yields
\begin{align} \label{eq:lemma_proof_a}
\frac{v_i [k + 1] - v_i[k] }{1 - v_i[k]}  & =  \delta_T \frac{v_i[k]}{ v_i[k] - 1} +  v^T[k] \beta_T A_i
\end{align}
With (\ref{eq:def_bik}) and (\ref{eq:def_cik}), we obtain from (\ref{eq:lemma_proof_a}) that
\begin{align}
b_i[k]  & =   \delta_T c_i[k] +  v^T[k] \beta_T A_i \label{ci_bi_NIMFA}
\end{align}
For a single node $i$, we stack equation (\ref{ci_bi_NIMFA}) for time instants $k= 0, ..., n-1$ and obtain
\begin{align*}
\begin{pmatrix}
b_i[0] \\
\vdots\\
b_i[n-1]
\end{pmatrix} & =    \begin{pmatrix}
c_i[0] \\
\vdots\\
c_i[n-1]
\end{pmatrix} \delta_T +  \begin{pmatrix}
v^T[0] \\
\vdots\\
v^T[n-1]
\end{pmatrix} \beta_T A_i 
\end{align*}
By combining the $i$-th column of the weighted adjacency matrix $\beta_T A_i$ and the curing rate $\delta_T$ into one vector, we obtain
\begin{align}
\begin{pmatrix}
b_i[0] \\
\vdots\\
b_i[n-1]
\end{pmatrix} & =  \begin{pmatrix}
v^T[0] & c_i[0] \\
\vdots & \vdots \\
v^T[n-1] & c_i[n-1]
\end{pmatrix} \begin{pmatrix}
 \beta_T A_i\\
\delta_T
\end{pmatrix} \label{eq:bvA}
\end{align}
With (\ref{matriecsBCV}), we rewrite (\ref{eq:bvA}) as
\begin{align}
B_i & =  \begin{pmatrix}
V & C_i
\end{pmatrix} \begin{pmatrix}
\beta_T A_i\\
\delta_T
\end{pmatrix} , \label{sdfddd}
\end{align}
where the vectors $B_i \in \mathbb{R}^{n}$ and $C_i\in \mathbb{R}^{n}$ denote the $i$-th column of the matrix $B$ and the matrix $C$, respectively. Stacking the equation (\ref{sdfddd}) for row $i = 1, ..., N$ yields
\begin{align*}
\begin{pmatrix}
B_1 \\
\vdots \\
B_N
\end{pmatrix}  & =  \begin{pmatrix}
V & 0 & ...& 0& C_1\\
0& V  & ...&0& C_2\\
\vdots &   &\ddots&& \vdots\\
 0&    ... &0&V & C_N
\end{pmatrix} \begin{pmatrix}
\beta_T A_1 \\
\vdots \\
\beta_T A_N\\
\delta_T
\end{pmatrix} 
\end{align*}
We define the $Nn \times 1$ vectors $B_\textrm{vec} = (B^T_1, ..., B^T_N)^T$ and $C_\textrm{vec} = (C^T_1, ..., C^T_N)^T$. Finally, we obtain the set of linear equations 
\begin{align*}
B_\textrm{vec}  & =  \begin{pmatrix}
I_N \otimes V & C_\textrm{vec} 
\end{pmatrix} \begin{pmatrix}
\beta_T A_\textrm{vec} \\
\delta_T
\end{pmatrix}, 
\end{align*}
where $\otimes$ denotes the Kronecker product of two matrices, which proves Lemma \ref{lemma:reconstruction_as_linear_system}.

\section{Proof of Lemma \ref{lemma:reduced_size_lin_sys}}
\label{appendix:reduced_size_lin_sys}
We denote the singular value decomposition (SVD) of the viral state matrix $V$ by
\begin{align} \label{V_usvd}
V = U_r S_r Q^T_r + U_\epsilon S_\epsilon Q^T_\epsilon,
\end{align}
where $S_r = \textrm{diag}(\sigma_1, ..., \sigma_r)$ is the $r \times r$ diagonal matrix of the largest $r$ singular values and $S_\epsilon$ is the $(n-r) \times (n-r)$diagonal matrix of the remaining $(n-r)$ singular values. The $n \times r$ matrix $U_r$ and the $n \times (n-r)$ matrix $U_\epsilon$ contain the respective left-singular vectors, and the $N \times r$ matrix $Q_r$ and the $N\times (n-r)$ matrix $Q_\epsilon$ contain the respective right-singular vectors.

We insert the SVD (\ref{V_usvd}) of the matrix $V$ in the linear system (\ref{eq:lin_sys}) and obtain for every $l= 1, ..., N$ that
\begin{align} \label{eq:linsys_svd}
B_l = U_r S_r Q^T_r \xi_l + U_\epsilon S_\epsilon Q^T_\epsilon \xi_l + C_l \delta_T,
\end{align}
where the $N \times 1$ vector $\xi_l$ equals the $l$-th part of the $N^2 \times 1$ vector $\Pi_N w$, when $\Pi_N w$ is divided into $N$ equally sized parts, i.e., $\Pi_N w = (\xi^T_1, ..., \xi^T_N)^T$. Since $U_\epsilon S_\epsilon Q^T_\epsilon \xi_l \approx 0$, we can approximate the linear system (\ref{eq:linsys_svd}) by 
\begin{align} \label{eq:linsys_svd_b}
B_l = U_r S_r Q^T_r \xi_l + C_l \delta_T
\end{align}
for every $l= 1, ..., N$. Finally, we replace (\ref{eq:linsys_svd_b}) by 
\begin{align} \label{eq:linsys_svd_if}
\tilde{B}_l =  Q^T_r \xi_l + \tilde{C}_l \delta_T,
\end{align}
where the two $r \times 1$ vectors $\tilde{B}_l$, $\tilde{C}_l$ are obtained as follows. First, the two $n \times 1$ vectors $B_l$ and $C_l$ are projected onto the image of the singular vector matrix $U_r$, which gives two $n \times 1$ vectors $B_{l, U}$ and $C_{l, U}$. Then, the two $r \times 1$ vectors $\tilde{B}_l$ and $\tilde{C}_l$ follow as the solution to 
\begin{align*}
B_{l, U} = U_r S_r \tilde{B}_l \quad \text{and} \quad C_{l, U} = U_r S_r \tilde{C}_l 
\end{align*}
The two $r \times 1$ vectors $\tilde{B}_l$, $\tilde{C}_l$ are explicitly given by 
\begin{align*}
\tilde{B}_l = \left(U_r S_r \right)^\dagger B_l \quad \text{and} \quad \tilde{C}_l \left(U_r S_r \right)^\dagger C_l,
\end{align*}
where $\left(U_r S_r \right)^\dagger$ denotes the Moore-Penrose pseudoinverse of the matrix $U_r S_r$. Stacking (\ref{eq:linsys_svd_if}) for $l=1, ..., N$ yields the linear system
\begin{align*}
\begin{pmatrix}
\tilde{B}_1 \\
\tilde{B}_2 \\
\vdots \\
\tilde{B}_N
\end{pmatrix}  & =  \begin{pmatrix}
Q^T_r & 0 & ...& 0& \tilde{C}_1\\
0& Q^T_r  & ...&0& \tilde{C}_2\\
\vdots &   &\ddots&& \vdots\\
 0&    ... &0&Q^T_r & \tilde{C}_N
\end{pmatrix} 
\begin{pmatrix} 
\Pi_N & 0 \\
0 & 1
\end{pmatrix}
\begin{pmatrix} 
w \\
\delta_T
\end{pmatrix},
\end{align*}
which proves Lemma \ref{lemma:reduced_size_lin_sys}.
\end{document}